\DeclareSymbolFont{cyrletters}{OT2}{wncyr}{m}{n}
\DeclareMathSymbol{\Sha}{\mathalpha}{cyrletters}{"58}
\definecolor{refkey}{rgb}{1,1,1}
\definecolor{labelkey}{rgb}{1,1,1}
\definecolor{cite}{rgb}{0.9451,0.2706,0.4941}
\definecolor{ruri}{rgb}{0.0078,0.4022,0.8010}
\def\Vol{{\rm Vol}}
\def\Tr{{\rm Tr}}
\def\Tr{{\rm Tr}}
\theoremstyle{plain}
\newtheorem{theorem}{Theorem}[section]
\newtheorem{proposition/example}[theorem]{Proposition/Example}
\newtheorem{proposition}[theorem]{Proposition}
\newtheorem{corollary}[theorem]{Corollary}
\newtheorem{lemma}[theorem]{Lemma}
\theoremstyle{definition}
\newtheorem{definition}[theorem]{Definition}
\newtheorem{remark}[theorem]{Remark}
\newtheorem{example}[theorem]{Example}
\newtheorem{conjecture/question}[theorem]{Conjecture/Question}
\newtheorem{remark/definition}[theorem]{Remark/Definition}
\newtheorem{definition/notation}[theorem]{Definition/Notation}
\theoremstyle{remark}
\numberwithin{equation}{section}
\begin{document}
\title{\textbf{$U(1)$-Gauge  Theories on $G_2$-Manifolds}}

\author{Zhi Hu}

\address{ \textsc{School of Mathematics, Nanjing University of Science and Technology, Nanjing 230026, China}\endgraf\textsc{Department of Mathematics, Mainz University, 55128 Mainz, Germany}}

\email{halfask@mail.ustc.edu.cn; huz@uni-mainz.de}

\author{Runhong Zong}

\address{ \textsc{Department of Mathematics, Nanjing University}}

\email{rzong@nju.edu.cn}

\maketitle

\begin{abstract}
In this paper, we investigate two types of  $U(1)$-gauge field theories on $G_2$-manifolds. One  is the $U(1)$-Yang-Mills theory which admits the classical instanton solutions, we show that $G_2$-manifolds emerge from the anti-self-dual $U(1)$ instantons, which is an analogy of Yang's result for Calabi-Yau manifolds. The other one is the higher-order $U(1)$-Chern-Simons theory as a generalization of K\"{a}hler-Chern-Simons theory. We introduce the notion of higher-order $U(1)$-instanton, as the vacuum configurations of higher-order $U(1)$-Chern-Simons theory.  By suitable choice of gauge and regularization technique, we calculate the partition function under semiclassical approximation. Finally, to make sure of  the invariance  at quantum level under the large gauge transformations,  we  use Deligne-Beilinson cohomology theory to  give the the higher-order $U(1)$-Chern-Simons actions ($U(1)$-BF-type actions) for nontrivial $U(1)$-principle bundles.
\end{abstract}

\tableofcontents
\section{Introduction}
$G_2$-manifolds appear in the compactification of $M$-theory or 11-dimensional supergeravity to achieve effective 4-dimensional theory  with $\mathcal{N}=1$ supersymmetry \cite{1,a}.
Mathematically, there are two equivalent approaches to define $G_2$-manifolds. The first definition treats a $G_2$
manifold as a 7-dimensional Riemannian manifold with holonomy group as a subgroup  of $G_2$, and  the other one defines a  $G_2$-manifold as a 7-dimensional oriented spin manifold with a torsion free $G_2$-structure which is a special 3-form $\varphi$ (called fundamental 3-form) parallel with respect to the induced Levi-Civita connection.
Many examples of 7-dimensional manifold with  holonomy group  $G_2$ have also been constructed \cite{2,3,4,5,6}.
To some extent, $G_2$-manifolds can be viewed as the analog of Calabi-Yau 3-folds:
\begin{align*}
  \begin{tabular}{|c|c|}
  \hline
 Calabi-Yau 3-fold $N$  & $G_2$-manifold $M$ \\
\hline
 K\"{a}hler form $\omega$ & fundamental 3-form $\varphi$\\
\hline
 complex structure $I$ & fundamental 4-form $*_\varphi\varphi$ \\
\hline
$I$-holomorphic curve in $N$ &  associated 3-dimensional submanifold of $M$\\
\hline
special Lagrangian submanifold of $N$   & coassociated 4-dimensional submanifold of $M$ \\
  \hline
\end{tabular}.
\end{align*}
It is noteworthy that the metric on a $G_2$-manifold is totally determined by the fundamental 3-form $\varphi$ via a highly nontrivial manner, hence the fundamental 4-form $*_\varphi\varphi$ is not independent of the  fundamental 3-form $\varphi$, where $*_\varphi$ denote the Hodge star with respect to the induced metric $g_\varphi$.
Conceptually, we should  consider which results for Calabi-Yau manifolds can be generalized to $G_2$-manifolds.

In a series of papers \cite{66,7,8,88,2011,9,2012,2013,20131,99} by H.-S. Yang and his collaborators, the authors proposed a kind  of \emph{emergent gravity}, which is achieved  by considering the deformation of a symplectic manifold. In this framework,   Darboux theorem or the Moser lemma in symplectic geometry is reinterpreted as equivalence principle. From this point of view, a line bundle over a symplectic manifold leads to  a dynamical symplectic manifold described by a gauge theory of symplectic gauge fields. Then the  quantization of the dynamical symplectic manifold gives rise to a dynamical noncommutative spacetime described by a noncommutative $U(1)$-gauge theory.
 A basic  idea  of  Yang's emergent  gravity  is  to  realize  the  gauge/gravity duality using  the  Lie  algebra homomorphism between noncommutative $\star$-algebra (gauge theory side) and derivation algebra (gravity side).
Then H.-S. Yang showed that   the \emph{commutative limit} of noncommutative anti-self-dual $U(1)$-instanton equations  turns into  the equations for spin connections of an  emergent  Calabi-Yau  manifold  \cite{994}.

In Sec. 2, we will generalize such mechanism of producing Calabi-Yau  manifolds to the $G_2$-manifolds. $G_2$-instanton equations were first introduced in  \cite{sd}, which are also divided into self-dual type and anti-self-dual type according to the irreducible representations of $G_2$ on 2-forms. In the spirit of  taking commutative limit,
the anti-self-dual  $U(1)$-instanton equations give rise to  equations satisfied by a collection of local orthogonal frame fields (siebenbein), we will show that  these equations force the spin connection to be valued in the Lie algebra $\mathfrak{g}_2$ of $G_2$, hence determine a $G_2$-manifold. Namely, we prove the following theorem.
\begin{theorem}[=Theorem \ref{nko}]
A 7-dimensional  spin  Riemannian manifold $M$ is a $G_2$-manifold if and only if it is an anti-self-dual $U(1)$-instanton.
\end{theorem}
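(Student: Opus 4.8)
The plan is to pass to a local oriented orthonormal coframe and reduce the whole statement to a pointwise identity in the representation theory of $G_2$, using that $\mathfrak{g}_{2}$ is precisely the annihilator of the fundamental $3$-form inside $\mathfrak{so}(7)$. Recall the $G_2$-decomposition $\Lambda^{2}T^{*}M=\Lambda^{2}_{7}\oplus\Lambda^{2}_{14}$, where $\Lambda^{2}_{7}=\{\iota_{X}\varphi:X\in TM\}$ is the $(+2)$-eigenspace and $\Lambda^{2}_{14}=\{\beta:\beta\wedge *\varphi=0\}=\{\beta:\varphi_{abc}\beta^{ab}=0\}$ the $(-1)$-eigenspace of $\beta\mapsto *(\varphi\wedge\beta)$; under $\mathfrak{so}(7)\cong\Lambda^{2}T^{*}M$ one has $\Lambda^{2}_{14}\cong\mathfrak{g}_{2}$, so that $\mathfrak{so}(7)=\mathfrak{g}_{2}\oplus\Lambda^{2}_{7}$ as $G_2$-modules. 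By the construction of Section 2, ``anti-self-dual $U(1)$-instanton'' means, in the commutative limit, that the siebenbein $E^{a}=E^{a}_{\mu}\,dx^{\mu}$ satisfy a first-order system obtained from the requirement that the emergent field strength lie in $\Lambda^{2}_{14}$. Hence the theorem amounts to the equivalence: [the siebenbein system holds] $\iff$ [the Levi-Civita connection of $g_{\varphi}$ takes values in $\mathfrak{g}_{2}$], the right-hand side being the holonomy-reduction characterisation of a $G_2$-manifold.

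For ``$G_2$-manifold $\Rightarrow$ anti-self-dual $U(1)$-instanton'', fix a coframe adapted to $\varphi$, so that $\varphi$ has constant components; then $\nabla\varphi=0$ says exactly $\omega_{\mu}\cdot\varphi=0$, i.e.\ $\omega_{\mu}\in\mathfrak{g}_{2}$ for every $\mu$, where $\omega_{\mu}\cdot$ is the induced action on $\Lambda^{3}$. Substituting this into the Cartan structure equations $dE^{a}+\omega^{a}{}_{b}\wedge E^{b}=0$ and into the Section 2 dictionary (which writes the anholonomy coefficients, hence the commutative-limit field strength, as an explicit linear function of $\omega$) shows that the $\Lambda^{2}_{7}$-part of the field strength vanishes, so the siebenbein solve the anti-self-dual $U(1)$-instanton equations. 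For the converse, start from a solution of the siebenbein system, write $dE^{a}=-\tfrac12 c^{a}{}_{bc}E^{b}\wedge E^{c}$, recover $\omega$ from the $c$'s via the orthonormal Koszul formula (an \emph{invertible} linear map between structure functions and connection coefficients), and check that the anti-self-duality equations are equivalent to the vanishing of the $\Lambda^{2}_{7}$-component of $\omega$; thus $\omega_{\mu}\in\mathfrak{g}_{2}$ for all $\mu$, whence $\nabla\varphi=0$ and $M$ is a $G_2$-manifold. The passage between ``$\omega\in\mathfrak{g}_{2}$'' and the actual $U(1)$ gauge-theoretic data is provided locally by the emergent-gravity dictionary of Section 2.

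The step I expect to be the main obstacle is the representation-theoretic matching in the converse direction: one must show that the constraints defining $\Lambda^{2}_{14}$, imposed on the emergent field strength built from the siebenbein, carve out \emph{precisely} the subspace of connection coefficients with zero $\Lambda^{2}_{7}$-part --- not a larger subspace (which would only bound, rather than pin down, the holonomy) and not a smaller one (which would over-determine the frame and leave no $G_2$-solutions). Concretely this reduces to the algebraic identity that, for an orthonormal frame, the $\Lambda^{2}_{7}$-component of the emergent field strength equals a nonzero multiple of the $\Lambda^{2}_{7}$-part of $\omega$ under $\mathfrak{so}(7)=\mathfrak{g}_{2}\oplus\Lambda^{2}_{7}$; one verifies it by decomposing the relevant tensor products under $G_2$ and using the standard $\varphi$-contraction identities to see that the coefficient is nonzero. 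Once this is in place, both implications follow from the above, and the remaining ingredients --- the equivalence of $\nabla\varphi=0$ with the $G_2$-holonomy condition, and, in the differential-form formulation, with $d\varphi=0=d*\varphi$ --- are standard.
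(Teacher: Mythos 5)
Your overall strategy is the right one and is essentially a representation-theoretic repackaging of what the paper does by brute force: both proofs reduce the theorem to the pointwise linear-algebra claim that the instanton condition on the anholonomy coefficients $f_{ij}{}^{k}$ is equivalent to the spin connection $\omega_{i;jk}$ being valued in $\mathfrak{g}_{2}\cong\Lambda^{2}_{14}\subset\mathfrak{so}(7)$. The paper verifies this claim completely explicitly: it writes the instanton equation as the seven component relations \eqref{yu}--\eqref{yui}, substitutes them into the spinor covariant derivative so that the connection coefficients organize into the fourteen elements $V_{a},W_{a}$, checks by direct computation that these span a $14$-dimensional Lie subalgebra of $\mathfrak{spin}(7)$, and identifies that subalgebra with $\mathfrak{g}_{2}$ by exhibiting the invariant spinor $\eta_{0}$ and recovering $\varphi_{0}$ from $\eta_{0}^{\dagger}\Gamma_{ijk}\eta_{0}$. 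That explicit verification is precisely the step you defer.

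The gap is therefore the step you yourself flag, and your description of it is not only unproved but misstated. The instanton condition constrains $f$ as an element of $\Lambda^{2}V^{*}\otimes V^{*}$ (the antisymmetric pair is the bracket pair, the free index is the output), while the holonomy condition constrains $\omega\in V^{*}\otimes\Lambda^{2}V^{*}$ (the free index is the direction, the antisymmetric pair is the Lie-algebra slot); the Koszul map $K$ between them permutes index positions. Decomposing under $G_{2}$, both spaces are $V_{1}\oplus 2V_{7}\oplus V_{14}\oplus 2V_{27}\oplus V_{64}$, with one copy of $V_{7}$ and of $V_{27}$ in the $\Lambda^{2}_{7}$-part and one in the $\Lambda^{2}_{14}$-part. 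By Schur's lemma $K$ acts on each of these two isotypic components as an invertible $2\times2$ matrix, not as a scalar, so the statement ``the $\Lambda^{2}_{7}$-component of the field strength equals a nonzero multiple of the $\Lambda^{2}_{7}$-part of $\omega$'' does not make sense as written; what must be shown is that the off-diagonal entries of these $2\times2$ matrices vanish, i.e.\ that $K$ carries $\Lambda^{2}_{14}\otimes V^{*}$ exactly onto $V^{*}\otimes\Lambda^{2}_{14}$. A generic equivariant isomorphism does not do this, so the claim cannot follow from decomposing tensor products alone --- it requires an actual computation with the Koszul formula and the $\varphi$-contraction identities (equivalently, the paper's equations \eqref{yu}--\eqref{yui} together with the $V_{a},W_{a}$ bookkeeping). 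Note also that the same multiplicity-two issue afflicts your forward direction, where you assert without proof that $\omega_{\mu}\in\mathfrak{g}_{2}$ forces the $\Lambda^{2}_{7}$-part of the anholonomy to vanish. Until this block-triangularity is established in both directions, the proof is incomplete at exactly the point that carries the content of the theorem.
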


Chern-Simons theory is another important kind of gauge theory \cite{wit,witt,ad,mm}. A natural generalization  of usual Chern-Simons theory in three dimensions to a $G_2$-manifold $M$ is to consider the following action which was  first introduced in \cite{dt}
\begin{align}\label{as}
  S_{\textrm{CS}}=\int_M CS_3(A)\wedge *_\varphi\varphi,
\end{align}
where
\begin{align}
  CS_3(A)=\Tr(A dA+\frac{2}{3}A^3)
\end{align}
is the  Chern-Simons 3-form associated to the connection $A$ on a  trivial $G$-principal bundle over $M$ for a  compact  complex matrix Lie group $G$.
The critical points of the action \eqref{as} are exactly the anti-self-dual $G_2$-instantons \footnote{Some authors call those $G_2$-instantons.}, namely the solutions of the equation
\begin{align*}
  F_A\wedge*_\varphi\varphi=0,
\end{align*}
where $F_A=dA+A^2$ is the curvature 2-form of $A$. The constructions of anti-self-dual $G_2$-instantons on some special $G_2$-manifolds have been obtained \cite{ear1,ear2,ear3,ear4,ear5}.
If one choose a special $G_2$-manifold $M=CY_3\times S^1$, by dimensional reduction, $S_{\textrm{CS}}$ can be reexpressed as the sum of  $B$-model 6-brane and $\bar B$-model 6-brane actions and an extra term related to the stability of the brane \cite{bbb}. Similar to the work in \cite{bw}, one may connect the partition function of $S_{\textrm{CS}}$  with the cohomology of  moduli space of Hermitian-Yang-Mills connection on $CY_3$ (this work will appear elsewhere).

In the rest part of this  paper, we will consider \emph{higher-order Chern-Simons theory} on $\mathcal{M}=M\times L$ with $M$ being a $G_2$-manifold and $L$ standing for  $\mathbb{R}$
or $S^1$.
Our staring point is  the K\"{a}hler-Chern-Simons (KCS) theory proposed by Nair and Schiff, so we first briefly introduce this theory. The  action is given by
\begin{align}\label{ass}
  S_{\textrm{KCS}}=\int_{K\times \mathbb{R}}(CS_3(\mathcal{A})\wedge \omega+\Tr(F_\mathcal{A}\wedge \phi+F_\mathcal{A}\wedge \phi^*)),
\end{align}
where $K$ is a K\"{a}hler surface with the K\"{a}hler form $\omega$, $\mathcal{A}$ is the connection on the   trivial $G$-principal bundle over $K\times \mathbb{R}$,  $\phi$ and $\phi^*$, respectively, are  two Lagrange multipliers that are  Lie-algebra valued $(2,0)$-form and $(0,2)$-form on $K$ and also 1-forms on $\mathbb{R}$. Write $\mathcal{A}=A+B$ with $A$ being  a Lie-algebra valued 1-form on $\mathbb{R}$ and $B$ being  a Lie-algebra valued 1-form on $K$, then  assuming $K$ closed, the critical configurations of the action \eqref{ass} are solutions of equations
\begin{align*}
\frac{\partial B}{\partial t}&=0,\\
  F_B^{2,0}= F_B^{0,2}&=0,\\
  F_B\wedge\omega&=0,
\end{align*}
 where $t$ is the coordinate on $\mathbb{R}$, $F_B^{2,0}$ and $F_B^{0,2}$ stand for the $(2,0)$-part and $(0,2)$-part of the curvature 2-form $F_B$ on $K$ of $B$, respectively. Note that  these equations are equivalent to the  Hermitian-Yang-Mills equation on $K$, in particular, if $G$ is semisimple, they are also equivalent to the  anti-self-dual Yang-Mills instanton equation on $K$ (see Proposition 1.2.2 in \cite{t}). The quantization  moduli space of critical configurations has been investigated in \cite{k1,k2,li}.

Now we discuss  the $G_2$-analogue of KCS theory. Firstly, the K\"{a}hler form $\omega$ in the action \eqref{ass} is replaced by the fundamental 3-form $\varphi$, then taking into account the dimensions, $CS_3(\mathcal{A})$ should be  replaced by Chern-Simons 5-form $CS_5(\mathcal{A})$.
More precisely, we  consider the so-called  higher-order Chern-Simons action,
\begin{align}\label{aaa}
 S_{\textrm{HCS}}=\int_{\mathcal{M}}(\Tr(\mathcal{A} d\mathcal{A} d\mathcal{A}+\frac{3}{2}\mathcal{A}^3d\mathcal{A}+\frac{3}{5}\mathcal{A}^5))\wedge\varphi,
\end{align}
where $\mathcal{A}$ is the connection on the   trivial $G$-principal bundle  over $\mathcal{M}$.
Decomposing $\mathcal{A}=A+B$ for $A=\mathcal{A}_0 dt$, $B=\mathcal{A}_\mu dX^\mu$ with local  coordinates $t$ on $L$ and $\{X^\mu,i=1,\cdots,7\}$ on $M$,  the  action  \eqref{aaa} reduces to
\begin{align}\label{mv}
 S_{\textrm{HCS}}=& -\int_L dt\int_M \Tr((Bd_MB+d_MBB+\frac{3}{2}B^3)\dot{B})\wedge \varphi\nonumber\\
& + 3\int_L dt\int_M \Tr(\mathcal{A}_0 F_B^2)\wedge \varphi,
\end{align}
where $d_M$ stands for the exterior differential operator on $M$,  $\dot{B}=\frac{\partial\mathcal{A}_\mu}{\partial t} dX^\mu$, $F_B=d_MB+B^2$.
By Chern-Weil theory, the integral
$\int_M \Tr(F_B^2)\wedge \varphi$ is a topological invariant, hence the action $S_{\textrm{HCS}}$ has a large gauge symmetry at the quantum level $ \mathcal{A}_0\rightarrow  \mathcal{A}_0+f(t)$ with $\int_Lf(t)dt\in \mathbb{Z}$.
Usually, the  invariance at quantum level under large gauge transformations  requires $\varphi$ should be of integral period, i.e. $\varphi\in H^3(M,\mathbb{Z})\hookrightarrow H^3(\mathcal{M},\mathbb{Z})$ \cite{hh}. In particular, if one picks  $G=U(1)$, the  action \eqref{aaa} is more simple as
\begin{align*}
S_{\textrm{HCS}}=\int_{\mathcal{M}}\mathcal{A}\wedge  d\mathcal{A}\wedge  d\mathcal{A}\wedge\varphi,
\end{align*}
which can be generalized to the  higher-order $U(1)$-BF-type action
\begin{align}
 S_{\textrm{ABC}}=\int_{\mathcal{M}}\mathcal{A}\wedge  d\mathcal{B}\wedge  d\mathcal{C}\wedge\varphi
\end{align}
for one forms $\mathcal{A},\mathcal{B},\mathcal{C}$ on $\mathcal{M}$. It would be more appropriate to be called the  $U(1)$-ABC action.

In Sec. 3, we  introduces the notion of  \emph{higher-order $G$-instanton}. It is described by
the equation
\begin{align*}
 F_B^2\wedge \varphi=0,
\end{align*}
which is obtained by the variation of the action \eqref{mv}
 with respect to  $\mathcal{A}_0$. According to the  decomposition of $F_B^2$  into the irreducible  representations of $G_2$, we can define self-dual higher-order $G$-instantons and anti-self-dual higher-order $G$-instantons:
 \begin{align*}
   \left\{
     \begin{array}{ll}
     F_B^2=\alpha\wedge \varphi \textrm{ for some } \alpha\in\Lambda^1(M,\mathrm{Ad}_P)  , & \hbox{self-dual higher-order $G$-instanton;} \\
       F_B^2\wedge \varphi=  *_\varphi F_B^2\wedge \varphi=0, & \hbox{anti-self-dual higher-order $G$-instanton.}
     \end{array}
   \right.
 \end{align*}
The main results of this section are summarized in the following vanishing-type theorem.
\begin{theorem}[=Corollary \ref{mk}, Theorem \ref{th}, Theorem \ref{thh}]
Assume $B$ is a higher-order $U(1)$-instanton, then
\begin{enumerate}
  \item $B$ is a self-dual instanton $\Longleftrightarrow$ $B$ is an anti-self-dual instanton $\Longleftrightarrow$ $B$ is a flat connection $\Longleftrightarrow$ $F_B\wedge \varphi=0$;
  \item $B$ is a self-dual higher-order $U(1)$-instanton $\Longleftrightarrow$ $F_B^2=0$;
  \item when $B$ is a special higher-order $U(1)$-instanton, $B$ is a flat connection $\Longleftrightarrow$ $B\wedge *_\varphi F_B=0$.
\end{enumerate}
\end{theorem}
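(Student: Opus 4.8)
The plan is to derive all three parts from the $G_2$-representation theory of exterior forms, combined with three elementary facts: since $U(1)$ is abelian, $F_B=dB$ is closed and $F_B^2:=F_B\wedge F_B=d(B\wedge F_B)$ is exact; torsion-freeness of the $G_2$-structure gives $d\varphi=0$ and $d(*_\varphi\varphi)=0$; and pointwise one has $\varphi\wedge\varphi=0$, $\beta\wedge\varphi=2*_\varphi\beta$ for $\beta\in\Lambda^2_7$, and $\beta\wedge\varphi=-*_\varphi\beta$ for $\beta\in\Lambda^2_{14}$, so that $\beta\mapsto\beta\wedge\varphi$ is a $G_2$-equivariant isomorphism $\Lambda^2\to\Lambda^5$. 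I will take $M$ compact (or the fields compactly supported) so that Stokes' theorem is available.

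For part (1), the isomorphism above already yields $F_B\wedge\varphi=0\iff F_B=0\iff B$ flat, and a flat connection trivially satisfies the self-dual and the anti-self-dual instanton conditions, so the content lies in the two reverse implications. Under the standing hypothesis $F_B^2\wedge\varphi=0$: if $F_B\in\Lambda^2_7$ then $F_B^2\wedge\varphi=F_B\wedge(F_B\wedge\varphi)=2\,F_B\wedge*_\varphi F_B=2|F_B|^2\,\mathrm{vol}_\varphi$, whereas if $F_B\in\Lambda^2_{14}$ then $F_B^2\wedge\varphi=-|F_B|^2\,\mathrm{vol}_\varphi$; in either case the instanton equation forces $F_B\equiv 0$, which closes the four-fold equivalence.

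For part (2), the implication $F_B^2=0\Rightarrow B$ self-dual is trivial, so I would isolate the converse as a pointwise claim in $G_2$-linear algebra: on a $7$-dimensional vector space with $G_2$-form $\varphi$, if a $2$-form $\omega$ satisfies $\omega\wedge\omega=\alpha\wedge\varphi$ for some $1$-form $\alpha$, then $\omega\wedge\omega=0$. Since $G_2$ acts transitively on the unit sphere I may assume $\alpha=c\,e^1$, so $\omega\wedge\omega=c\,e^1\wedge\psi$, where $\psi$ is the part of $\varphi$ not containing $e^1$; a direct check in the standard model shows $\psi$ is a $3$-form of full rank $6$ on $(e^1)^\perp$ (up to scale it is the real part of the holomorphic volume form of the induced $SU(3)$-structure). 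Writing $\omega=e^1\wedge\beta+\gamma$ with $\beta,\gamma$ free of $e^1$, the part of $\omega\wedge\omega$ not containing $e^1$ is $\gamma\wedge\gamma$, which must therefore vanish; hence $\gamma$ has rank $\le 2$ and $\beta\wedge\gamma$, being a wedge of at most three $1$-forms, has rank $\le 3$; comparing the $e^1$-components gives $2\,\beta\wedge\gamma=c\,\psi$, which is impossible for $c\neq 0$ since $\psi$ has rank $6$. Thus $c=0$, $\omega\wedge\omega=0$, and taking $\omega=F_B$ completes part (2).

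For part (3), I would rewrite $B\wedge*_\varphi F_B$ using $*_\varphi(B\wedge*_\varphi F_B)=\pm\,\iota_{B^\sharp}F_B$, so that $B\wedge*_\varphi F_B=0$ says exactly that $B^\sharp$ lies in the radical of $F_B$ at every point, which is automatic when $B$ is flat. For the converse one invokes the extra equation defining a special higher-order $U(1)$-instanton: if this is the coclosedness $d(*_\varphi F_B)=0$, then the Leibniz identity $d(B\wedge*_\varphi F_B)=F_B\wedge*_\varphi F_B-B\wedge d(*_\varphi F_B)$ together with $B\wedge*_\varphi F_B=0$ gives $F_B\wedge*_\varphi F_B=|F_B|^2\,\mathrm{vol}_\varphi=0$, hence $F_B=0$; if the defining condition is instead of a global nature, the same identity plus Stokes' theorem on closed $M$ forces $\int_M|F_B|^2\,\mathrm{vol}_\varphi=0$. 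The main obstacle is the pointwise $G_2$-linear algebra in part (2) — equivalently, that the image of the squaring map $\omega\mapsto\omega\wedge\omega$, $\Lambda^2\to\Lambda^4$, meets the summand $\Lambda^4_7$ only at the origin — together with, in part (3), identifying exactly which auxiliary equation the adjective \emph{special} encodes; by contrast the analytic ingredients are routine.
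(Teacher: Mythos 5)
Your parts (1) and (2) are correct and follow essentially the paper's route. For (1), the paper likewise reads off from $F_B^2\wedge\varphi=0$ the pointwise identity $2|(F_B)_{(1)}|_{g_\varphi}=|(F_B)_{(2)}|_{g_\varphi}$ and concludes that either instanton condition kills $F_B$; the paper's convention puts eigenvalue $-2$ on $\Lambda^2_{(1)}$ and $+1$ on $\Lambda^2_{(2)}$, which flips some of your signs but changes nothing. For (2), the paper also normalizes $\alpha|_p=c\,e^1$ using transitivity of $G_2$ on $S^6$ and then rules out $\omega\wedge\omega=c\,e^1\wedge\psi$ by pointwise linear algebra. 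Where you argue that the non-$e^1$ component forces $\gamma\wedge\gamma=0$, hence $\gamma$ decomposable, hence $2\beta\wedge\gamma$ decomposable and therefore never equal to the non-decomposable $\psi$, the paper instead works only with the $e^1$-component $a\wedge b=\psi$ and grinds out an inconsistent system in the ratios $b_{ij}/(a_ia_j)$ after showing all $a_i\neq 0$. Your rank argument is cleaner and coordinate-free, and it proves the same pointwise fact (the squaring map $\Lambda^2\to\Lambda^4$ meets $\Lambda^4_{(2)}$ only in $0$); the paper's version has the minor advantage of not needing to identify $\psi$ with the real part of a holomorphic volume form.

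Part (3) has a genuine gap, exactly where you flagged it: you do not know what \emph{special} means, and both of your guesses are wrong. In the paper a special higher-order $U(1)$-instanton is a $1$-form $B$ with $B\wedge d_MB\in\Lambda^3_{(3)}(M)$, i.e.\ $B\wedge F_B\wedge\varphi=0$ and $B\wedge F_B\wedge *_\varphi\varphi=0$; it is neither coclosedness of $F_B$ nor a global integral condition, so your Leibniz-plus-Stokes argument does not get off the ground (and even your fallback would need $d(*_\varphi F_B)=0$ to kill the term $\int_M B\wedge d(*_\varphi F_B)$). The paper's actual proof is again pointwise $G_2$-linear algebra: decompose $F_B=C_1+C_2$ with $C_1=X\lrcorner\varphi\in\Lambda^2_{(1)}$, rewrite the two defining conditions as $3B\wedge *_\varphi\varphi\wedge X^\vee=B\wedge *_\varphi F_B$ and $B\wedge *_\varphi X^\vee=0$, so that $B\wedge *_\varphi F_B=0$ forces (after normalizing $B|_p=e^1$) $X^\vee|_p=c\,e^1$ and then $c=0$, i.e.\ $C_1=0$ everywhere; since a special instanton is automatically a higher-order instanton by Theorem \ref{thh}(1), Corollary \ref{mk} then upgrades $F_B\in\Lambda^2_{(2)}$ to $F_B=0$. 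Your observation that $B\wedge *_\varphi F_B=0$ is equivalent to $\iota_{B^\sharp}F_B=0$ is correct and gives the trivial direction, but the substantive direction requires this pointwise analysis plus an appeal to part (1), not an integration by parts.
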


In Sec.4, we calculate the semiclassical partition function around higher-order $U(1)$-instantons for higher-order $U(1)$-Chern-Simons action over $M\times S^1$. Some crucial ingredients include:
\begin{itemize}
  \item torus gauge fixing is imposed \cite{bt,hn};
  \item heat kernel regularization  technique is used \cite{bt,li};
  \item Schwarz-Pestun-Witten's method for calculating partition function of \emph{degenerate} functional plays an important  role \cite{sc,pw,hs,hh}.
\end{itemize}

 In Sec. 5, we   discuss how to write a correct (higher-order) Chern-Simons action which is  invariant at quantum level under the large gauge transformations  when  $G$-principal bundle $\mathcal{P}$ is nontrivial. Actually,  the story becomes more subtle and complicated.  For example, a suitable framework to cope with  the nontrivial $U(1)$-principal bundle is the
\emph{Deligne-Beilinson cohomology theory} \cite{dbb,bg,gt}.  We will use the \v{C}ech  resolution of Deligne complex to give an explicit expression of the higher-order $U(1)$-Chern-Simons action (and more generally, higher-order $U(1)$-BF-type action) for a nontrivial $U(1)$-principal bundle.

\section{$G_2$-manifolds as  anti-self-dual $U(1)$-instantons}
There is a standard $G_2$-structure on $\mathbb{R}^7$ consists of  a standard  Euclidean metric $g_0$ and a 3-form $\varphi_0$ given by
\begin{align}\label{opi}
 \varphi_0=e^{123}+e^{145}+e^{246}+e^{347}-e^{167}+e^{257}-e^{356},
\end{align}
where $\{e_1,\cdots, e_7\}$ is  an orthonormal frame that provides a coordinate $\{x^1,\cdots,x^7\}$ on $\mathbb{R}^7$, and  $e^{ijk}=e^{i}\wedge e^{j}\wedge e^{k}$ for $e^{i}$ being the dual basis of $e_i$.
The $U(1)$-Yang-Mills functional on $\mathbb{R}^7$ is given by
\begin{align*}
  S=\int_{\mathbb{R}^7}d^7xF_{ij}F_{ij},
\end{align*}
where $F_{ij}=\frac{\partial A_j}{\partial x^i}-\frac{\partial A_i}{\partial x^j}$ for the dual vector field $A=A_i(x)e^{i}$ on  $\mathbb{R}^7$. Identifying $e^{i}$ with $dx^i$, the critical point reads
\begin{align*}
  d*_{g_0}F=0,
\end{align*}
where $F=dA=\frac{1}{2}F_{ij}dx^i\wedge dx^j$, $*_{g_0}$ denotes the Hodge dual with respect to $g_0$.
 Obviously, if
\begin{align}\label{tr}
  *_{g_0}F=c F\wedge \varphi_0
\end{align}
 for some nonzero constant $c$, then the above equation holds true automatically. There is a decomposition
\begin{align*}
 F=F_{(1)}+F_{(2)},
\end{align*}
where $F_{(1)},F_{(2)}$ satisfy
\begin{align*}
 *_{g_0}(F_{(1)}\wedge \varphi_0)&=-2F_{(1)},\\
 *_{g_0}(F_{(2)}\wedge \varphi_0)&=F_{(2)},
\end{align*}
respectively. Therefore, only when$c=-\frac{1}{2}$ or $c=1$  the equation \eqref{tr} admits nontrivial solutions, they are called  self-dual $U(1)$-instanton or  anti-self-dual $U(1)$-instanton, respectively, i.e.
\begin{align}
 \left\{
         \begin{array}{ll}
            *_{g_0}F=-\frac{1}{2} F\wedge \varphi_0, & \hbox{\textrm{self-dual $U(1)$-instanton};}\\
            *_{g_0}F=F\wedge \varphi_0, & \hbox{\textrm{anti-self-dual $U(1)$-instanton},}
         \end{array}
       \right.
\end{align}
or more explicitly
\begin{align}
 \left\{
         \begin{array}{ll}
          F_{ij}=-\frac{1}{4}{T_{ij}}^{kl}F_{kl}, & \hbox{\textrm{self-dual $U(1)$-instanton};}\\
            F_{ij}=\frac{1}{2}{T_{ij}}^{kl}F_{kl}, & \hbox{\textrm{anti-self-dual $U(1)$-instanton},} \\
 \end{array}
       \right.
\end{align}
where \begin{align}\label{8}
        {T_{ij}}^{kl}=\frac{1}{6}{\epsilon_{ij}}^{klpqr}(\varphi_0)_{pqr}.
      \end{align}
We will only focus on  anti-self-dual $U(1)$-instanton.

Consider a 7-dimensional spin manifold $M$ equipped with a Riemann metric $g$. Let $\{E_i=E_i^\mu\frac{\partial}{\partial X^\mu},i=1,\cdots,7\}$ be pointwisely linearly independent local orthogonal frame fields, i.e. so-called "siebenbein", over some neighborhood with local coordinate $\{X^\mu,\mu=1,\cdots,7\}$ in $M$, namely we have
\begin{align*}
  E_i^\mu E_j^\nu\delta^{ij}=g_{\mu\nu},
\end{align*}
and let $\{E^i=E^i_\mu d X^\mu,i=1,\cdots,7\}$ be the corresponding  dual 1-forms which satisfy
\begin{align*}
   E^i_\mu E^j_\nu g^{\mu\nu}=\delta^{ij}.
\end{align*}
The spin connection 1-form $\{{\omega^i}_{j}\}$ on $M$ is defined by
\begin{align}
  dE^i+{\omega^i}_{j}\wedge E^j&=0\label{lo},\\
  {\omega^i}_{j}+{\omega^j}_{i}&=0.
\end{align}
As done in \cite{994}, one does the following replacement (so-called commutative limit)
$$D_i:=\frac{\partial}{\partial x_i}+A_i\longrightarrow E_i,$$
then the anti-self-dual $U(1)$-instanton equation becomes
\begin{align}\label{g}
  [E_i,E_j]=\frac{1}{6}{T_{ij}}^{kl}[E_k,E_l],
\end{align}
or equivalently,
\begin{align}\label{o}
  {f_{ij}}^s=\frac{1}{6}{T_{ij}}^{kl}{f_{kl}}^s,
\end{align}
where ${f_{ij}}^k$ is defined by
\begin{align}\label{1l}
  [E_i,E_j]={f_{ij}}^kE_k.
\end{align}
 Writing ${\omega^i}_{j}=\omega^i_{kj}E^k$, the equation \eqref{lo} together with  \eqref{1l} implies
\begin{align*}
  {f_{ij}}^k=\omega^k_{ij}-\omega^k_{ji}=:\Omega^k_{ij},
\end{align*}
then the equation \eqref{o} reduces to
\begin{align}\label{ty}
  \Omega^s_{ij}=\frac{1}{6}{T_{ij}}^{kl}\Omega^s_{kl},
\end{align}
which is explicitly expressed in terms of components as
\begin{align}
  \Omega^s_{12}&= \Omega^s_{56}-\Omega^s_{47}\label{yu},\\
  \Omega^s_{13}&= \Omega^s_{57}+ \Omega^s_{46},\\
  \Omega^s_{14}&=-\Omega^s_{36}+\Omega^s_{27},\\
  \Omega^s_{15}&=-\Omega^s_{37}- \Omega^s_{26},\\
  \Omega^s_{16}&= \Omega^s_{25}+ \Omega^s_{34},\\
  \Omega^s_{17}&= \Omega^s_{35}-\Omega^s_{24},\\
  \Omega^s_{23}&= \Omega^s_{67}-\Omega^s_{45}\label{yui}.
\end{align}
\begin{definition}
Let $M$ be a 7-dimensional oriented spin manifold equipped with a Riemann metric $g$. If there exists an open cover $\{U_\alpha\}$ of $M$, and there is a siebenbein $\{E_i^{(\alpha)},i=1,\cdots,7\}$ (compatible with the orientation) over each $U_\alpha$ such that
\begin{align}\label{g5}
  [E_i^{(\alpha)},E_j^{(\alpha)}]=\frac{1}{6}{T_{ij}}^{kl}[E_k^{(\alpha)},E^{(\alpha)}_l],
\end{align}
we call $M$ an anti-self-dual $U(1)$-instanton.
\end{definition}

The covariant derivative on a spinor $\eta$ over an open neighborhood $U$ along the vector field $E_i$ is defined by
\begin{align*}
  \nabla_{E_i}\eta=E_i(\eta)+\frac{1}{2}\Omega^i_{jk}\Sigma_{jk}\bullet\eta,
\end{align*}
where  $\bullet$ denotes Clifford multiplication, and $\{\Sigma_{ij}=-\Sigma_{ji},i<j=1,\cdots,7\}$, satisfying the relations
\begin{align}\label{jk}
  [\Sigma_{ij},\Sigma_{kl}]=\Sigma_{il}\delta_{jk}+ \Sigma_{jk}\delta_{il}- \Sigma_{ik}\delta_{jl}-\Sigma_{jl}\delta_{ik},
\end{align}
forms a basis of Lie algebra $\mathfrak{spin}(7)$. It follows from the equations \eqref{yu}-\eqref{yui} that
\begin{align}\label{ft}
   \nabla_{E_i}\eta=&\ E_i(\eta)+\Omega^i_{56}(\Sigma_{56}+\Sigma_{12})\bullet\eta+\Omega^i_{47}(\Sigma_{47}-\Sigma_{12})\bullet\eta\nonumber\\
   &+\Omega^i_{57}(\Sigma_{57}+\Sigma_{13})\bullet\eta+\Omega^i_{46}(\Sigma_{46}+\Sigma_{13})\bullet\eta\nonumber\\
   &+\Omega^i_{36}(\Sigma_{36}-\Sigma_{14})\bullet\eta+\Omega^i_{27}(\Sigma_{27}+\Sigma_{14})\bullet\eta\nonumber\\
   &+\Omega^i_{37}(\Sigma_{37}-\Sigma_{15})\bullet\eta+\Omega^i_{26}(\Sigma_{26}-\Sigma_{15})\bullet\eta\nonumber\\
   &+\Omega^i_{25}(\Sigma_{25}+\Sigma_{16})\bullet\eta+\Omega^i_{34}(\Sigma_{34}+\Sigma_{16})\bullet\eta\nonumber\\
   &+\Omega^i_{35}(\Sigma_{35}+\Sigma_{17})\bullet\eta+\Omega^i_{24}(\Sigma_{24}-\Sigma_{17})\bullet\eta\nonumber\\
   &+\Omega^i_{67}(\Sigma_{67}+\Sigma_{23})\bullet\eta+\Omega^i_{45}(\Sigma_{45}-\Sigma_{23})\bullet\eta.
\end{align}

We introduce \begin{align*}
           V_1=\Sigma_{56}+\Sigma_{12},\ \ W_1=\Sigma_{47}-\Sigma_{12},\\
           V_2=\Sigma_{57}+\Sigma_{13},\ \ W_2=\Sigma_{46}+\Sigma_{13},\\
           V_3=\Sigma_{36}-\Sigma_{14},\ \ W_3=\Sigma_{27}+\Sigma_{14},\\
           V_4=\Sigma_{37}-\Sigma_{15},\ \ W_4=\Sigma_{26}-\Sigma_{15},\\
          V_5= \Sigma_{25}+\Sigma_{16},\ \ W_5=\Sigma_{34}+\Sigma_{16},\\
          V_6= \Sigma_{35}+\Sigma_{17},\ \ W_6=\Sigma_{24}-\Sigma_{17},\\
          V_7= \Sigma_{67}+\Sigma_{23},\ \ W_7=\Sigma_{45}-\Sigma_{23}.
          \end{align*}
By the relation \eqref{jk} one easily checks the following nontrivial  commutators \cite{hh2}
\begin{align*}
[V_1,V_2]=-V_7, \ \ \ \ \ \ &  [V_1,V_3]=V_6+W_6,\\
[V_1,V_4]=V_5,   \ \ \ \ \ \ &
  [V_1,V_5]=-2W_4\\
 [V_1,V_6]=-V_3-W_3,\ \ \ \ \ \ & [V_1,V_7]=V_2,\\
   [V_1,W_2]=W_7,\ \ \ \ \ \ &
   [V_1,W_3]=-W_6,\\ [V_1,W_4]=2V_5,\ \ \ \ \ \ &[V_1,W_5]=-W_4,\\ [V_1,W_6]=W_3,\ \ \ \ \ \  &[V_1,W_7]=-W_2, \\ [V_2,V_3]=W_5,\ \ \ \ \ \  & [V_2,V_4]=2V_6, \\ [V_2,V_5]=-V_3-W_3, \ \ \ \ \ \ &[V_2,V_6]=-2V_4,\\
   [V_2,V_7]=-V_1,\ \ \ \ \ \ &[V_2,W_1]=W_7,\\
   [V_2,W_3]=V_5-W_5,\ \ \ \ \ \ &[V_2,W_4]=V_6,\\
  [V_2,W_5]=-V_3,\ \ \ \ \ \ &[V_2,W_7]=-W_1,\\
  [V_3,V_4]=-V_7-W_7,\ \ \ \ \ \ &[V_3,V_5]=W_2,\\
  [V_3,V_6]=V_1+W_1,\ \ \ \ \ \ &[V_3,V_7]=V_4-W_4,\\
  [V_3,W_1]=W_6,\ \ \ \ \ \ &[V_3,W_2]=-2W_5,\\
  [V_3,W_4]=-W_7,\ \ \ \ \ \ &[V_3,W_5]=2W_2,\\
  [V_3,W_6]=-W_1,\ \ \ \ \ \ &[V_3,W_7]=W_4,\\
  [V_4,V_5]=V_1,\ \ \ \ \ \ &[V_4,V_6]=2V_2,\\
  [V_4,V_7]=-V_3-W_3,\ \ \ \ \ \ &[V_4,W_1]=V_5-W_5,\\
  [V_4,W_2]=-V_6,\ \ \ \ \ \ &[V_4,W_3]=-W_7,\\
  [V_4,W_5]=V_1+W_1,\ \ \ \ \ \ &[V_4,W_6]=-V_2,\\
  [V_4,W_7]=W_3,\ \ \ \ \ \ &[V_5,V_6]=-V_7,\\
  [V_5,V_7]=V_6,\ \ \ \ \ \ &[V_5,W_1]=-W_4,\\
  [V_5,W_2]=V_3,\ \ \ \ \ \ &[V_5,W_3]=-V_2+W_2,\\
  [V_5,W_4]=-2V_1,\ \ \ \ \ \ &[V_5,W_6]=V_7+W_7,\\
  [V_5,W_7]=-V_6-W_6,\ \ \ \ \ \ &[V_6,V_7]=-V_5,\\
  [V_6,W_1]=-W_3,\ \ \ \ \ \ &[V_6,W_2]=V_4,\\
  [V_6,W_3]=W_1,\ \ \ \ \ \ &[V_6,W_4]=-V_2,\\
  [V_6,W_5]=V_7+W_7,\ \ \ \ \ \ &[V_6,W_7]=V_5-W_5,\\
  [V_7,W_1]=W_2,\ \ \ \ \ \ &[V_7,W_2]=-W_1,\\
  [V_7,W_3]=-V_4+W_4,\ \ \ \ \ \ &[V_7,W_4]=-V_3-W_3,\\
  [V_7,W_5]=W_6,\ \ \ \ \ \ &[V_7,W_6]=-W_5,\\
  [W_1, W_2]=V_7,\ \ \ \ \ \ &[W_1,W_3]=2W_6,\\
  [W_1,W_4]=-V_5,\ \ \ \ \ \ &[W_1,W_5]=-V_4+W_4,\\
  [W_1,W_6]=-2W_3,\ \ \ \ \ \ &[W_1,W_7]=V_2,\\
  [W_2,W_3]=-W_5,\ \ \ \ \ \ &[W_2,W_4]=V_6+W_6,\\
  [W_2,W_5]=-2V_3,\ \ \ \ \ \ &[W_2,W_6]=V_4-W_4,\\
  [W_2,W_7]=V_1,\ \ \ \ \ \ &[W_3,W_4]=V_7+W_7,\\
  [W_3,W_5]=-W_2,\ \ \ \ \ \ &[W_3,W_6]=2W_1,\\
  [W_3,W_7]=-V_4,\ \ \ \ \ \ &[W_4,W_5]=V_1,\\
  [W_4,W_6]=-V_2+W_2,\ \ \ \ \ \ &[W_4,W_7]=-V_3,\\
  [W_5,W_6]=V_7,\ \ \ \ \ \ &[W_5,W_7]=V_6+W_6,\\
  [W_6,W_7]=V_5-W_5,\ \ \ \ \ \ &
\end{align*}
then we immediately find that  $\{V_1,\cdots, V_7,W_1,\cdots,W_7\}$ generate a 14-dimensional Lie subalgebra $\mathfrak{g}$ of $\mathfrak{spin}(7)$.

To show this subalgebra $\mathfrak{g}$ is exactly $\mathfrak{g}_2$, we only need recover to  $\varphi_0$ from the invariant spinor.
Dirac gamma matrices satisfying Clifford algebra in 7-dimensional Euclidean space  are given
by
\begin{align}
\left\{
  \begin{array}{ll}
    \Gamma_i= \Gamma_i^{(6)}, & \hbox{$i=1,\cdots,6$;} \\
    \Gamma_7=\sqrt{-1}\Gamma_1^{(6)}\cdots \Gamma_6^{(6)},
  \end{array}
\right.
\end{align}
where $\Gamma_[^{(6)}, i=1,\cdots,6,$ denote the Dirac gamma matrices in six dimensions. Choosing  purely imaginary $8\times8$ matrices, one explicitly writes Dirac gamma matrices as follows
\begin{align*}
  \Gamma_1&=\left(
             \begin{array}{cccccccc}
               0 & 0 & 0 & 0 & 0 & 0 & 0& \sqrt{-1} \\
               0 & 0 & \sqrt{-1} & 0 & 0 & 0 & 0& 0 \\
               0 & -\sqrt{-1} & 0 & 0 & 0 & 0 & 0& 0 \\
                0 & 0 & 0 & 0 & 0 & 0 & \sqrt{-1}& 0 \\
                0 & 0 & 0 & 0 & 0 & -\sqrt{-1} & 0& 0 \\
                0 & 0 & 0 & 0 & \sqrt{-1} & 0 & 0& 0 \\
                0 & 0 & 0 & -\sqrt{-1} & 0 & 0 & 0& 0 \\
               -\sqrt{-1}& 0 & 0 & 0 & 0 & 0 & 0& 0 \\
             \end{array}
           \right),\end{align*}
           \begin{align*}
           \Gamma_2&=\left(
             \begin{array}{cccccccc}
               0 & 0 & -\sqrt{-1} & 0 & 0 & 0 & 0& 0 \\
               0 & 0 & 0& 0 & 0 & 0 & 0& \sqrt{-1} \\
              \sqrt{-1} & 0 & 0 & 0 & 0 & 0 & 0& 0 \\
                0 & 0 & 0 & 0 & 0 & \sqrt{-1} & 0& 0 \\
                0 & 0 & 0 & 0 & 0 & 0 & \sqrt{-1}& 0 \\
                0 & 0 & 0 & -\sqrt{-1} &0& 0 & 0& 0 \\
                0 & 0 & 0 & 0 & -\sqrt{-1} & 0 & 0& 0 \\
              0& -\sqrt{-1} & 0 & 0 & 0 & 0 & 0& 0 \\
             \end{array}
           \right),\end{align*}
         \begin{align*} \Gamma_3&=\left(
             \begin{array}{cccccccc}
               0 & \sqrt{-1}& 0 & 0 & 0 & 0 & 0& 0 \\
           -\sqrt{-1} & 0 & 0 & 0 & 0 & 0 & 0& 0 \\
               0 & 0 & 0 & 0 & 0 & 0 & 0& \sqrt{-1} \\
                0 & 0 & 0 & 0 & -\sqrt{-1} & 0 & 0& 0 \\
                0 & 0 & 0 & \sqrt{-1} & 0 & 0& 0& 0 \\
                0 & 0 & 0 & 0 & 0 & 0 & \sqrt{-1}& 0 \\
                0 & 0 & 0 &0& 0 & -\sqrt{-1} & 0& 0 \\
              0& 0 & -\sqrt{-1} & 0 & 0 & 0 & 0& 0 \\
             \end{array}
           \right),\end{align*}
          \begin{align*} \Gamma_4&=\left(
             \begin{array}{cccccccc}
               0 & 0 & 0 & 0 & 0 & 0 & -\sqrt{-1}& 0 \\
               0 & 0 & 0& 0 & 0 & -\sqrt{-1}& 0& 0 \\
             0 & 0 & 0 & 0 & \sqrt{-1} & 0 & 0& 0 \\
                0 & 0 & 0 & 0 & 0 & 0 & 0& \sqrt{-1} \\
                0 & 0 & -\sqrt{-1} & 0 & 0 & 0 & 0& 0 \\
                0 & \sqrt{-1} & 0 & 0 &0& 0 & 0& 0 \\
                \sqrt{-1} & 0 & 0 & 0 & 0 & 0 & 0& 0 \\
              0& 0 & 0 & -\sqrt{-1} & 0 & 0 & 0& 0 \\
             \end{array}
           \right),\end{align*}
          \begin{align*} \Gamma_5&=\left(
             \begin{array}{cccccccc}
               0 & 0 & 0 & 0 & 0 & \sqrt{-1} & 0& 0 \\
              0 & 0 & 0 & 0 & 0 & 0 & -\sqrt{-1}& 0 \\
               0 & 0 & 0 & -\sqrt{-1} & 0 & 0 & 0& 0 \\
                0 & 0 & \sqrt{-1} & 0 & 0 & 0 & 0& 0 \\
                0 & 0 & 0 & 0 & 0 & 0& 0& \sqrt{-1} \\
                -\sqrt{-1}& 0 & 0 & 0 & 0 & 0 & 0& 0 \\
                0 & \sqrt{-1} & 0 &0& 0 & 0 & 0& 0 \\
              0& 0 & 0 & 0 & -\sqrt{-1} & 0 & 0& 0 \\
             \end{array}
           \right),\end{align*}
          \begin{align*} \Gamma_6&=\left(
             \begin{array}{cccccccc}
               0 & 0 & 0 & 0 & -\sqrt{-1} & 0 & 0& 0 \\
               0 & 0 & 0& \sqrt{-1} & 0 & 0 & 0& 0 \\
             0 & 0 & 0 & 0 & 0 & 0 & -\sqrt{-1}& 0 \\
                0 & -\sqrt{-1} & 0 & 0 & 0 & 0 & 0& 0 \\
               \sqrt{-1} & 0 & 0 & 0 & 0 & 0 & 0& 0 \\
                0 & 0 & 0 & 0 &0& 0 & 0& \sqrt{-1}\\
                0 & 0 & \sqrt{-1} & 0 & 0 & 0 & 0& 0 \\
              0& 0 & 0 & 0 & 0 & -\sqrt{-1} & 0& 0 \\
             \end{array}
           \right),\end{align*}
           \begin{align*}  \Gamma_7&=\left(
             \begin{array}{cccccccc}
               0 & 0 & 0 &\sqrt{-1} & 0 & 0 & 0& 0 \\
              0 & 0 & 0 & 0 & \sqrt{-1} & 0 & 0& 0 \\
               0 & 0 & 0 & 0 & 0 & \sqrt{-1} & 0& 0 \\
                -\sqrt{-1} & 0 & 0 & 0 & 0 & 0 & 0& 0 \\
                0 & -\sqrt{-1} & 0 & 0 & 0 & 0& 0& 0 \\
                0 & 0 & -\sqrt{-1} & 0 & 0 & 0 & 0& 0 \\
                0 & 0 & 0 &0& 0 & 0 & 0& \sqrt{-1} \\
              0& 0 & 0 & 0 & 0 & 0 & -\sqrt{-1}& 0 \\
             \end{array}
           \right).\end{align*}
Then $\Sigma_{ij}$ can be realized as
\begin{align*}
 \Sigma_{ij}=\frac{1}{4}[\Gamma_i,\Gamma_j],
\end{align*}
which provides  explicit expressions
\begin{align*}
V_1=\frac{1}{2}\left(
                 \begin{array}{cccccccc}
                   0 & 1 & 0 & 0 & 0 & 0 & 0 & -1 \\
                 -1 & 0 & 1 & 0 & 0 & 0 & 0 & 0 \\
                   0 & -1 & 0 & 0 & 0 & 0 & 0 & 1 \\
                   0& 0 & 0 & 0 & 1 & 0 & 1&0 \\
                  0 & 0 & 0 & -1 & 0 & 1 & 0 & 0 \\
                   0 & 0 & 0 & 0 & -1 & 0 & -1 & 0 \\
                   0 & 0 & 0 & -1 & 0 & 1 & 0 & 0 \\
                   1 & 0 & -1 & 0 & 0 & 0 & 0 & 0 \\
                 \end{array}
               \right),\
W_1=\frac{1}{2}\left(
                 \begin{array}{cccccccc}
                   0 & -1 & 0 & 0 & 0 & 0 & 0 & 1 \\
                 1 & 0 & -1 & 0 & 0 & 0 & 0 & 0 \\
                   0 & 1 & 0 & 0 & 0 & 0 & 0 & -1 \\
                   0& 0 & 0 & 0 & -1 & 0 & 1&0 \\
                  0 & 0 & 0 & 1 & 0 & 1 & 0 & 0 \\
                   0 & 0 & 0 & 0 & -1 & 0 & 1 & 0 \\
                   0 & 0 & 0 & -1 & 0 & -1 & 0 & 0 \\
                   -1 & 0 & 1 & 0 & 0 & 0 & 0 & 0 \\
                 \end{array}
               \right),
\end{align*}
\begin{align*}
V_2=\left(
                 \begin{array}{cccccccc}
                   0 & 0 & 1 & 0 & 0 & 0 & 0 & 0 \\
                 0 & 0 & 0 & 0 & 0 & 0 & 0 & 0 \\
                   -1 & 0 & 0 & 0 & 0 & 0 & 0 & 0 \\
                   0& 0 & 0 & 0 & 0 & 0 & 0&0 \\
                  0 & 0 & 0 & 0 & 0 & 0 & 1 & 0 \\
                   0 & 0 & 0 & 0 & 0 & 0 & 0 & 0 \\
                   0 & 0 & 0 & 0 & -1 & 0 & 0 & 0 \\
                   0 & 0 & 0 & 0 & 0 & 0 & 0 & 0 \\
                 \end{array}
               \right),\
W_2=\left(
                 \begin{array}{cccccccc}
                   0 & 0 & 1 & 0 & 0 & 0 & 0 & 0 \\
                 0& 0 & 0 & 0 & 0 & 0 & 0 & 0 \\
                   -1 & 0& 0 & 0 & 0 & 0 & 0 & 0 \\
                   0& 0 & 0 & 0 & 0 & 1 & 0&0 \\
                  0 & 0 & 0 & 0 & 0 & 0 & 0 & 0 \\
                   0 & 0 & 0 & -1 & 0& 0 & 0 & 0 \\
                   0 & 0 & 0 & 0 & 0 & 0 & 0 & 0 \\
                  0 & 0 & 0 & 0 & 0 & 0 & 0 & 0 \\
                 \end{array}
               \right),
\end{align*}

\begin{align*}
V_3=\left(
                 \begin{array}{cccccccc}
                   0 & 0 & 0 & -1 & 0 & 0 & 0 & 0 \\
                 0& 0 & 0 & 0 & 0 & 0 & 0 & 0 \\
                   0 & 0 & 0 & 0 & 0 & 1 & 0 & 0 \\
                  1& 0 & 0 & 0 & 0 & 0 & 0&0 \\
                  0 & 0 & 0 & 0 & 0 & 0 & 0 & 0 \\
                   0 & 0 & -1 & 0 & 0 & 0 &0 & 0 \\
                   0 & 0 & 0 & 0 & 0 &0 & 0 & 0 \\
                   0 & 0 & 0 & 0 & 0 & 0 & 0 & 0 \\
                 \end{array}
               \right),\
W_3=\frac{1}{2}\left(
                 \begin{array}{cccccccc}
                   0 & 0 & 0 & 1 & 0 & 1 & 0 & 0 \\
                 0 & 0 & 0 & 0 & -1 & 0 & 1 & 0 \\
                   0 & 0 & 0 & -1 & 0 & -1 & 0 & 0 \\
                   -1& 0 & 1 & 0 & 0 & 0 & 0&0 \\
                  0 & 1 & 0 & 0 & 0 & 0 & 0 & -1 \\
                   -1 & 0 & 1 & 0 & 0 & 0 & 0 & 0 \\
                   0 & -1 & 0 & 0 & 0 & 0 & 0 & 1 \\
                  0 & 0 & 0 & 0 & 1 & 0 & -1 & 0 \\
                 \end{array}
               \right),
\end{align*}

\begin{align*}
V_4=\left(
                 \begin{array}{cccccccc}
                   0 & 0 & 0 & 0 & -1 & 0 & 0 & 0 \\
                 0& 0 & 0 & 0 & 0 & 0 & 0 & 0 \\
                   0 & 0 & 0 & 0 & 0 & 0 & 1 & 0 \\
                 0& 0 & 0 & 0 & 0 & 0 & 0&0 \\
                  1 & 0 & 0 & 0 & 0 & 0 & 0 & 0 \\
                   0 & 0 & 0 & 0 & 0 & 0 &0 & 0 \\
                   0 & 0 & -1 & 0 & 0 &0 & 0 & 0 \\
                   0 & 0 & 0 & 0 & 0 & 0 & 0 & 0 \\
                 \end{array}
               \right),\
W_4=\frac{1}{2}\left(
                 \begin{array}{cccccccc}
                   0 & 0 & 0 & 0 & -1 & 0 & -1 & 0 \\
                 0 & 0 & 0 & -1 & 0 & 1 & 0 & 0 \\
                   0 & 0 & 0 & 0 & 1 & 0 & 1 & 0 \\
                   0& 1 & 0 & 0 & 0 & 0 & 0&-1 \\
                  1 &0 & -1 & 0 & 0 & 0 & 0 & 0 \\
                   0 & -1 & 0 & 0 & 0 & 0 & 0 & 1 \\
                   1 & 0 & -1 & 0 & 0 & 0 & 0 & 0\\
                  0 & 0 & 0 & 1 & 0 & -1 & 0 & 0 \\
                 \end{array}
               \right),
\end{align*}

\begin{align*}
V_5=\frac{1}{2}\left(
                 \begin{array}{cccccccc}
                   0 & 0 & 0 & -1 & 0 & 1 & 0 & 0 \\
                 0& 0 & 0 & 0 & 1 & 0 & 1 & 0 \\
                   0 & 0 & 0 & 1 & 0 & -1 & 0 & 0 \\
                  1& 0 & -1 & 0 & 0 & 0 & 0&0 \\
                  0 & -1 & 0 & 0 & 0 & 0 & 0 & 1 \\
                   -1 & 0 & 1 & 0 & 0 & 0 &0 & 0 \\
                   0 & -1 & 0 & 0 & 0 &0 & 0 & 1 \\
                   0 & 0 & 0 & 0 & -1 & 0 & -1 & 0 \\
                 \end{array}
               \right),\
W_5=\left(
                 \begin{array}{cccccccc}
                   0 & 0 & 0 & 0 & 0 & 1 & 0 & 0 \\
                 0 & 0 & 0 & 0 & 0& 0 & 0 & 0 \\
                   0 & 0 & 0 & 1 & 0 & 0 & 0 & 0 \\
                  0& 0 & -1 & 0& 0 & 0 & 0&0 \\
                  0 & 0 & 0 & 0 & 0 & 0 & 0 & 0 \\
                   -1 & 0 & 0 & 0 & 0 & 0 & 0 & 0 \\
                   0 & 0 & 0 & 0 & 0 & 0 & 0 & 0 \\
                  0 & 0 & 0 & 0 & 0 & 0 &0 & 0 \\
                 \end{array}
               \right),
\end{align*}

\begin{align*}
V_6=\left(
                 \begin{array}{cccccccc}
                   0 & 0 & 0 & 0 & 0 & 0 & 1 & 0 \\
                 0& 0 & 0 & 0 & 0 & 0 & 0 & 0 \\
                   0 & 0 & 0 & 0 & 1 & 0 & 0 & 0 \\
                  0& 0 & 0 & 0 & 0 & 0 & 0&0 \\
                  0 & 0 & -1 & 0 & 0 & 0 & 0 & 0 \\
                   0 & 0 & 0 & 0 & 0 & 0 &0 & 0 \\
                  -1 & 0 & 0 & 0 & 0 &0 & 0 & 0 \\
                   0 & 0 & 0 & 0 & 0 & 0 & 0 & 0 \\
                 \end{array}
               \right),\
W_6=\frac{1}{2}\left(
                 \begin{array}{cccccccc}
                   0 & 0 & 0 & 0 & 1 & 0 & -1 & 0 \\
                 0 & 0 & 0 & 1 & 0 & 1 & 0 & 0 \\
                   0 & 0 & 0 & 0 & -1 & 0 & 1 & 0 \\
                   0& -1 & 0 & 0 & 0 & 0 & 0&1 \\
                  -1& 0 & 1 & 0 & 0 & 0 & 0 & 0 \\
                   0 & -1 & 0 & 0 & 0 & 0 & 0 & 1 \\
                  1 & 0 & -1& 0 & 0 & 0 & 0 & 0 \\
                  0 & 0 & 0 & -1 & 0 & -1 & 0 & 0 \\
                 \end{array}
               \right),
\end{align*}

\begin{align*}
V_7=\frac{1}{2}\left(
                 \begin{array}{cccccccc}
                   0 & -1 & 0 & 0 & 0 & 0 & 0 & 1\\
                 1& 0 & 1 & 0 & 0 & 0 & 0 & 0 \\
                   0 & -1 & 0 & 0 & 0 & 0 & 0 & 1 \\
                  0& 0 & 0 & 0 & 1 & 0 & -1&0 \\
                  0 & 0 & 0 & -1 & 0 & 1 & 0 & 0 \\
                   0 & 0 & 0 & 0 & -1 & 0 &1& 0 \\
                   0 & 0 & 0 & 1 & 0 &-1 & 0 & 0 \\
                   -1 & 0 & -1 & 0 & 0 & 0 & 0 & 0 \\
                 \end{array}
               \right),\
W_7=\frac{1}{2}\left(
                 \begin{array}{cccccccc}
                   0 & 1 & 0 & 0 & 0 & 0 & 0 & -1 \\
                 -1 & 0 & -1 & 0 & 0 & 0 & 0 & 0 \\
                   0 & 1 & 0 & 0 & 0 & 0 & 0 & -1 \\
                   0& 0 & 0& 0 & 1 & 0 & 1&0 \\
                  0 & 0 & 0 & -1 & 0 & -1 & 0 & 0 \\
                   0 & 0 & 0 & 0 & 1 & 0 & 1 & 0 \\
                   0 & 0 & 0 & -1 & 0 & -1 & 0 & 0 \\
                  1 & 0 & 1 & 0 & 0 & 0 & 0 & 0 \\
                 \end{array}
               \right).
\end{align*}

Let $G$ be a connected subgroup of $Spin(7)$ with Lie algebra $\mathfrak{g}$, then we have  the unique $G$-invariant spinor $\eta_0$ up to a constant scalar determined by
\begin{align}
  V_i\bullet\eta_0=W_i\bullet\eta_0=0
\end{align}
 for any $i=1,\cdots,7$. Imposing the normalization condition, we write
\begin{align}\label{788}
 \eta_0=\frac{1}{\sqrt{2}}\left(
        \begin{array}{c}
          0 \\
          1 \\
          0 \\
          0 \\
          0 \\
          0 \\
          0 \\
          -1 \\
        \end{array}
      \right).
\end{align}
Define \begin{align}\label{ui}
       \psi=\psi_{ijk}e^{ijk}=\sqrt{-1}\eta_0^\dagger\Gamma_{ijk}\eta_0 e^{ijk},
       \end{align}
where \begin{align*}
        \Gamma_{ijk}=\frac{1}{3!}\sum_{\sigma}(-1)^{|\sigma|}\Gamma_{\sigma(i)}\Gamma_{\sigma(j)}\Gamma_{\sigma(k)}
      \end{align*}
 with $\sigma$ standing for a permutation. It is clear that $\psi$ is $G$-invariant. A direct calculation shows that the nonzero coefficients $\psi_{ijk}=\sqrt{-1}\eta_0^\dagger\Gamma_{ijk}$ $(i<j<k)$ are given by
\begin{align*}
  &\psi_{123}=\psi_{246}=\psi_{167}=\psi_{257}=\psi_{356}=1,\\
&\psi_{347}=\psi_{145}=-1.
\end{align*}
If one reassigns the frame $\{e_1,e_2,e_3,e_6,e_7,e_4,e_5\}$ to $\{e_1,e_2,e_3,e_4,e_5,e_6,e_7\}$, we find that
$\psi$ exactly coincides with $6\varphi_0$.

If $M$ is an anti-self-dual $U(1)$-instanton, from the above arguments it follows that the condition \eqref{g5} guarantees the  spin connection 1-form is valued in $\mathfrak{g_2}$, which implies that  the holonomy group of the metric $g$ lies in $G_2$ due to Ambrose-Singer theorem.
 Conversely, if $M$ is  a $G_2$-manifold, it can be made into an anti-self-dual $U(1)$-instanton. As a consequence, we have the following theorem.
\begin{theorem}\label{nko}$M$ is a $G_2$-manifold if and only if it is an anti-self-dual $U(1)$-instanton.
\end{theorem}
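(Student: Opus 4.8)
To prove Theorem~\ref{nko} I would establish the two implications separately, both resting on one observation: in a local siebenbein $\{E_i\}$ satisfying the bracket identity~\eqref{g5}, the spin connection $1$-form written in that frame is valued in the explicit $14$-dimensional Lie subalgebra $\mathfrak{g}=\langle V_1,\dots,V_7,W_1,\dots,W_7\rangle$ of $\mathfrak{spin}(7)$, and conversely. Indeed, \eqref{g5}, its structure-constant form~\eqref{o}, the spin-connection form~\eqref{ty}, and the seven scalar relations~\eqref{yu}--\eqref{yui} are mutually equivalent --- this uses only $f_{ij}{}^{k}=\Omega^{k}_{ij}$, itself a consequence of~\eqref{lo} and~\eqref{1l} --- and substituting \eqref{yu}--\eqref{yui} into the Clifford formula for $\nabla_{E_i}\eta$ gives exactly~\eqref{ft}; since the $V_a,W_a$ are linearly independent the implication reverses, so \eqref{g5} holds in the frame $\{E_i\}$ if and only if the spin connection takes values in $\mathfrak{g}$ there.

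The substantive step is to identify $\mathfrak{g}$ with the Lie algebra $\mathfrak{g}_2$ of $G_2$. From the displayed commutator table one reads off that $\mathfrak{g}$ is closed under the bracket and has dimension $14$. To pin it down I would take the $\mathfrak{g}$-invariant spinor $\eta_0$ of~\eqref{788} (unique up to scale), form the $\mathfrak{g}$-invariant $3$-form $\psi$ of~\eqref{ui}, and compute its components: after the relabelling of the frame indicated in the text, $\psi=6\varphi_0$. Hence $\mathfrak{g}$ annihilates $\varphi_0$, so $\mathfrak{g}\subseteq\{A\in\mathfrak{so}(7):A\cdot\varphi_0=0\}=\mathfrak{g}_2$, and equality follows from the dimension count. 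Conceptually, $G_2\subset\mathrm{Spin}(7)$ is characterised as the stabiliser of a unit spinor, and $\varphi_0$ is the associated $3$-form.

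With this in hand the theorem is nearly immediate. If $M$ is an anti-self-dual $U(1)$-instanton, then over each chart $U_\alpha$ the spin connection is $\mathfrak{g}_2$-valued in the siebenbein $\{E_i^{(\alpha)}\}$, hence so is its curvature $2$-form; equivalently, $\eta_0$ (now a local parallel spinor over $U_\alpha$) and the $G_2$-form $\varphi_0(E^{(\alpha)})$ are parallel, so by the Ambrose--Singer theorem the holonomy algebra at each point lies in a copy of $\mathfrak{g}_2$, i.e.\ $\mathrm{Hol}(g)\subseteq G_2$ and $M$ is a $G_2$-manifold. Conversely, if $M$ is a $G_2$-manifold, reduce the orthonormal frame bundle to $G_2$ and pick local sections $\{E_i^{(\alpha)}\}$ over a cover $\{U_\alpha\}$; in each such frame the torsion-free $G_2$-structure --- equivalently the parallel unit spinor $\eta$ --- becomes the standard $\varphi_0$ (resp.\ $\eta_0$), and $\nabla\eta=0$ together with $E_i^{(\alpha)}(\eta_0)=0$ forces the spin connection into the annihilator of $\eta_0$, namely $\mathfrak{g}_2=\mathfrak{g}$; running the equivalences of the first paragraph backwards then yields~\eqref{g5}, so $M$ is an anti-self-dual $U(1)$-instanton.

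The main obstacle is the identification $\mathfrak{g}=\mathfrak{g}_2$: a priori the relations \eqref{yu}--\eqref{yui} only cut out \emph{some} $14$-dimensional subalgebra of $\mathfrak{spin}(7)\cong\mathfrak{so}(7)$, and one must check it is the $G_2$-subalgebra and not some other embedding. This is precisely what the explicit structure constants (for closure and the dimension) and the spinor computation (recovering $\varphi_0$ from the invariant spinor, hence detecting the stabiliser subgroup) accomplish. A secondary, routine point is the gluing of the locally defined data: in the first implication one passes to the restricted holonomy, or observes that the various local parallel $G_2$-structures all annihilate the holonomy algebra and hence are mutually compatible, while in the second implication the $G_2$-reduction of the frame bundle supplies the cover and the adapted frames at once.
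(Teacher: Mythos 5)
Your proposal is correct and follows essentially the same route as the paper: derive the frame relations \eqref{yu}--\eqref{yui} from \eqref{g5}, observe via \eqref{ft} and the commutator table that the spin connection lands in the $14$-dimensional subalgebra $\mathfrak{g}=\langle V_i,W_i\rangle$, identify $\mathfrak{g}=\mathfrak{g}_2$ by recovering $\varphi_0$ from the invariant spinor $\eta_0$, and conclude with Ambrose--Singer (the converse being the frame-bundle reduction). Your write-up is in fact somewhat more explicit than the paper's on the converse direction and on the gluing of local frames, but the underlying argument is the same.
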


\section{Higher-order $U(1)$-instantons}
Varying $\mathcal{A}_0$ in the action \eqref{mv} leads to the equation
\begin{align}\label{jh}
  F_B^2\wedge\varphi=0.
\end{align}
Here $F_B\in \Lambda^2(M,\mathrm{Ad}_P)$ for a $G$-principal bundle $P$ (not necessarily trivial) over $M$ with the adjoint vector  bundle $\mathrm{Ad}_P $.
It is known that
 the spaces $\Lambda^2(M), \Lambda^3(M)$ and $ \Lambda^4(M)$ of 2-forms, 3-forms and 4-forms on $M$ have orthogonal  decompositions with respect to the metric $g_\varphi$, respectively,
 \begin{align*}
\Lambda^2(M)&=\Lambda_{(1)}^2(M)\oplus\Lambda_{(2)}^2(M),\\
 \Lambda^3(M)&=\Lambda_{(1)}^3(M)\oplus\Lambda_{(2)}^3(M)\oplus\Lambda_{(3)}^3(M),\\
 \Lambda^4(M)&=\Lambda_{(1)}^4(M)\oplus\Lambda_{(2)}^4(M)\oplus\Lambda_{(3)}^4(M),
\end{align*}
where
\begin{align*}
\Lambda_{(1)}^2(M)&=\{\beta\in\Lambda^2:*_\varphi(\varphi\wedge \beta)=-2\beta\},\\
\Lambda_{(2)}^2(M)&=\{\beta\in\Lambda^2:*_\varphi(\varphi\wedge \beta)=\beta\},\\
\Lambda_{(1)}^3(M)&=\{f\varphi:f\in C^\infty(M)\},\\
 \Lambda_{(2)}^3(M)&=\{X\lrcorner *_\varphi\varphi:X\in TM\},\\
\Lambda_{(3)}^3(M)&=\{\beta\in \Lambda^3(M):\varphi\wedge \beta=*_\varphi\varphi\wedge\beta=0\},\\
 \Lambda_{(1)}^4(M)&=\{f*_\varphi\varphi:f\in C^\infty(M)\},\\
 \Lambda_{(2)}^4(M)&=\{\alpha\wedge \varphi:\alpha\in\Lambda^1(M)\},\\
\Lambda_{(3)}^4(M)&=\{\eta\in \Lambda^4(M):\varphi\wedge \beta=\varphi\wedge*_\varphi\beta=0\}.
\end{align*}
Then we introduce  the following definition.
\begin{definition}  The  connection $B$ on $P$
 \begin{itemize}
   \item is called a  higher-order $G$-instanton  if $F_B^2\in\Lambda_{(2)}^4(M,\mathrm{Ad}_P)\oplus \Lambda_{(3)}^4(M,\mathrm{Ad}_P)$,
   \item is called a  higher-order flat $G$-instanton  if $F_B^2=0$,
    \item is called a self-dual higher-order $G$-instanton if $F_B^2\in\Lambda_{(2)}^4(M,\mathrm{Ad}_P)$,
    \item is called an anti-self-dual higher-order $G$-instanton if $F_B^2\in\Lambda_{(3)}^4(M,\mathrm{Ad}_P)$.
 \end{itemize}
 \end{definition}
 \begin{theorem}The following two conditions are equivalent
\begin{enumerate}
  \item $B$ is a higher-order $G$-instanton,
  \item $ 2|F_B|_{g_\varphi}=|F_B\wedge \varphi|_{g_\varphi}$,
\end{enumerate}
 where   $|\bullet|_{g_\varphi}=\bullet\wedge*_\varphi\bullet$.
 \end{theorem}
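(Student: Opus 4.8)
The plan is to reduce the statement to a single pointwise identity between top-degree forms,
\begin{align*}
  F_B^2\wedge\varphi \;=\; 2\,|F_B|_{g_\varphi}\;-\;|F_B\wedge\varphi|_{g_\varphi}.
\end{align*}
Once this is available the equivalence is automatic: the splitting $\Lambda^4(M)=\Lambda^4_{(1)}(M)\oplus\Lambda^4_{(2)}(M)\oplus\Lambda^4_{(3)}(M)$ is $g_\varphi$-orthogonal with $\Lambda^4_{(1)}(M)=C^\infty(M)\cdot{*_\varphi\varphi}$, so a $4$-form $\Theta$ is orthogonal to $\Lambda^4_{(1)}(M)$, i.e. lies in $\Lambda^4_{(2)}(M)\oplus\Lambda^4_{(3)}(M)$, precisely when the function $\langle\Theta,{*_\varphi\varphi}\rangle$ vanishes identically; since $*_\varphi*_\varphi=\mathrm{id}$ on forms of every degree in dimension $7$, one has $\langle\Theta,{*_\varphi\varphi}\rangle\,\mathrm{vol}=\Theta\wedge\varphi$, so this condition is just $\Theta\wedge\varphi=0$. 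Taking $\Theta=F_B^2$ (interpreted componentwise in a local orthonormal frame of $\mathrm{Ad}_P$), $B$ is a higher-order $G$-instanton $\Longleftrightarrow F_B^2\wedge\varphi=0$, which by the displayed identity is $\Longleftrightarrow 2|F_B|_{g_\varphi}=|F_B\wedge\varphi|_{g_\varphi}$.

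For the identity I would work with the operator $P:=*_\varphi(\varphi\wedge\,\cdot\,)\colon\Lambda^2(M)\to\Lambda^2(M)$. By the definitions of $\Lambda^2_{(1)}(M)$ and $\Lambda^2_{(2)}(M)$ and the orthogonality of $\Lambda^2(M)=\Lambda^2_{(1)}(M)\oplus\Lambda^2_{(2)}(M)$, the operator $P$ is $g_\varphi$-self-adjoint with eigenvalues $-2$ on $\Lambda^2_{(1)}(M)$ and $+1$ on $\Lambda^2_{(2)}(M)$, hence satisfies $P^2+P-2=0$. I would then record the translation into wedge products: for an $\mathrm{Ad}_P$-valued $2$-form $F$,
\begin{align*}
  F\wedge *_\varphi F &= |F|_{g_\varphi},\\
  F\wedge *_\varphi(PF) &= F\wedge\varphi\wedge F = F^2\wedge\varphi,\\
  PF\wedge *_\varphi(PF) &= |PF|_{g_\varphi}=|F\wedge\varphi|_{g_\varphi},
\end{align*}
the second line using $*_\varphi*_\varphi=\mathrm{id}$ on $5$-forms together with the fact that $\varphi$ (degree $3$) commutes past the even-degree forms $F$ and $F^2$, and the third line using that $*_\varphi$ is an isometry, so $|PF|_{g_\varphi}=|{*_\varphi}(\varphi\wedge F)|_{g_\varphi}=|\varphi\wedge F|_{g_\varphi}$. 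Now pair the relation $P^2F_B=2F_B-PF_B$ with $F_B$ in the $g_\varphi$-inner product on $\Lambda^2(M,\mathrm{Ad}_P)$, use $P^\ast=P$ to rewrite $\langle P^2F_B,F_B\rangle=\langle PF_B,PF_B\rangle$, and multiply the resulting scalar identity by the volume form: by the dictionary it becomes $|F_B\wedge\varphi|_{g_\varphi}+F_B^2\wedge\varphi=2|F_B|_{g_\varphi}$, which is the claim.

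An equivalent and perhaps more concrete route is to decompose $F_B=F_{(1)}+F_{(2)}$ along $\Lambda^2_{(1)}\oplus\Lambda^2_{(2)}$, use $\varphi\wedge F_{(1)}=-2\,{*_\varphi}F_{(1)}$ and $\varphi\wedge F_{(2)}={*_\varphi}F_{(2)}$ (obtained by applying $*_\varphi$ to the eigenvalue equations), and expand the three $7$-forms directly; every cross term has the shape $F_{(1)}\wedge{*_\varphi}F_{(2)}=\langle F_{(1)},F_{(2)}\rangle_{g_\varphi}\,\mathrm{vol}$ and vanishes by orthogonality, leaving $|F_B|_{g_\varphi}=(|F_{(1)}|^2+|F_{(2)}|^2)\mathrm{vol}$, $|F_B\wedge\varphi|_{g_\varphi}=(4|F_{(1)}|^2+|F_{(2)}|^2)\mathrm{vol}$ and $F_B^2\wedge\varphi=(|F_{(2)}|^2-2|F_{(1)}|^2)\mathrm{vol}$, from which the identity is read off. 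I do not anticipate a genuine obstacle: the Hodge-star squares are all $+\mathrm{id}$ in dimension $7$, so the computation is short; what must be watched is the degree-parity bookkeeping when transporting $\varphi$ across $F_B$ and $F_B^2$, and---if one wants $G$ nonabelian rather than just $U(1)$---the choice of an $\mathrm{Ad}$-invariant inner product on $\mathfrak g$, which is what makes $|F_B|_{g_\varphi}$, $|F_B\wedge\varphi|_{g_\varphi}$ and $F_B^2\wedge\varphi$ honest scalar $7$-forms and lets the cross terms drop out.
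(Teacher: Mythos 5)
Your proposal is correct and follows essentially the same route as the paper: both decompose $F_B$ into its components in $\Lambda^2_{(1)}(M)\oplus\Lambda^2_{(2)}(M)$ and exploit the eigenvalue relations $*_\varphi(\varphi\wedge\beta)=-2\beta$ resp. $+\beta$ to compare $|F_B|_{g_\varphi}$, $|F_B\wedge\varphi|_{g_\varphi}$ and $F_B^2\wedge\varphi$. Your packaging of the computation as the single pointwise identity $F_B^2\wedge\varphi=2|F_B|_{g_\varphi}-|F_B\wedge\varphi|_{g_\varphi}$ yields both implications at once and is, if anything, tidier than the paper's separate treatment of the converse direction.
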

 \begin{proof}(1)$\Rightarrow$ (2): Let $(F_B)_{(i)}$ be the projection of $F_B$ on $ \Lambda_{(i)}^2(M,\mathrm{Ad}_P)$.
For a higher-order $G$-instanton $B$, writing $F_B=(F_B)_{(1)}+(F_B)_{(2)}$, we obtain \begin{align*}
               F_B^2\wedge\varphi&=((F_B)_{(1)}+(F_B)_{(2)})\wedge ((F_B)_{(1)}+(F_B)_{(2)})\wedge \varphi\\
               &=-2(F_B)_{(1)}\wedge *_\varphi(F_B)_{(1)}+(F_B)_{(2)}\wedge *_\varphi(F_B)_{(2)}\\
               &=0,
             \end{align*}
namely,
\begin{align}\label{sd}
 2|(F_B)_{(1)}|_{g_\varphi}=|(F_B)_{(2)}|_{g_\varphi},
\end{align}
 which is equivalent to the identity in (2).
Indeed, one has
\begin{align*}
  |F_B\wedge \varphi|_{g_\varphi}&=|((F_B)_{(1)}+(F_B)_{(2)})\wedge \varphi|_{g_\varphi}\\
&=|-2(F_B)_{(1)}+(F_B)_{(2)}|_{g_\varphi}\\
&=
4|(F_B)_{(1)}|_{g_\varphi}+|(F_B)_{(2)}|_{g_\varphi},
\end{align*}
which immediately yields
\begin{align*}
 |F_B\wedge \varphi|_{g_\varphi}=2(|(F_B)_{(1)}|_{g_\varphi}+|(F_B)_{(2)}|_{g_\varphi})=2 |F_B|_{g_\varphi}.
\end{align*}

      (2) $\Rightarrow$ (1): It is known that
       \begin{align*}
         (F_B)_{(1)}=\frac{1}{3}F_B-\frac{1}{3}*_\varphi(\varphi\wedge F_B),\\
           (F_B)_{(2)}=\frac{2}{3}F_B+\frac{1}{3}*_\varphi(\varphi\wedge F_B).
       \end{align*}
       By means of \eqref{sd}, we have
       \begin{align*}
        2|F_B|_{g_\varphi}+2|F_B\wedge \varphi|_{g_\varphi}-4F_B^2\wedge \varphi=4|F_B|_{g_\varphi}+|F_B\wedge \varphi|_{g_\varphi}+4F_B^2\wedge\varphi,
       \end{align*}
thus \begin{align*}
        3|F_B\wedge \varphi|_{g_\varphi}-4F_B^2\wedge \varphi=3|F_B\wedge \varphi|_{g_\varphi}+4F_B^2\wedge\varphi.
       \end{align*}
       Therefore, $F_B^2\wedge \varphi=0$, i.e. $B$ is a higher-order $G$-instanton.
\end{proof}
\begin{corollary}\label{mk}Assume $G$ is a compact semisimple complex Lie group or $G=U(1)$. If $B$ is a higher-order $G$-instanton, then the following conditions are equivalent
\begin{enumerate}
  \item $B$ is a self-dual $G$-instanton,
  \item $B$ is an ant-self-dual $G$-instanton,
  \item $B$ is a flat connection,
\item $F_B\wedge \varphi=0$.
\end{enumerate}
\end{corollary}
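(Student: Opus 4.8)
The plan is to reduce the four-way equivalence to the three implications $(4)\Rightarrow(3)$, $(1)\Rightarrow(3)$ and $(2)\Rightarrow(3)$, the reverse implications $(3)\Rightarrow(1),(3)\Rightarrow(2),(3)\Rightarrow(4)$ being immediate because a flat connection has $F_B=0$, which lies in both $\Lambda^2_{(1)}(M,\mathrm{Ad}_P)$ and $\Lambda^2_{(2)}(M,\mathrm{Ad}_P)$ and trivially annihilates $\varphi$. Throughout I write $F_B=(F_B)_{(1)}+(F_B)_{(2)}$ for the pointwise orthogonal splitting along $\Lambda^2(M,\mathrm{Ad}_P)=\Lambda^2_{(1)}(M,\mathrm{Ad}_P)\oplus\Lambda^2_{(2)}(M,\mathrm{Ad}_P)$, and I use that the higher-order instanton hypothesis forces the pointwise identity of top forms $2|(F_B)_{(1)}|_{g_\varphi}=|(F_B)_{(2)}|_{g_\varphi}$, i.e. \eqref{sd}, established in the proof of the preceding theorem.

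For $(4)\Rightarrow(3)$ I would only need the defining eigenvalue relations of the two summands: applying $*_\varphi$ to $*_\varphi(\varphi\wedge(F_B)_{(1)})=-2(F_B)_{(1)}$ and to $*_\varphi(\varphi\wedge(F_B)_{(2)})=(F_B)_{(2)}$ yields $\varphi\wedge(F_B)_{(1)}=-2*_\varphi(F_B)_{(1)}$ and $\varphi\wedge(F_B)_{(2)}=*_\varphi(F_B)_{(2)}$, so that $F_B\wedge\varphi=*_\varphi\big(-2(F_B)_{(1)}+(F_B)_{(2)}\big)$. As $*_\varphi$ is an isomorphism and $\Lambda^2_{(1)}(M,\mathrm{Ad}_P)$, $\Lambda^2_{(2)}(M,\mathrm{Ad}_P)$ are complementary subbundles, the hypothesis $F_B\wedge\varphi=0$ forces $(F_B)_{(1)}=(F_B)_{(2)}=0$, hence $F_B=0$. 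This step uses neither the higher-order hypothesis nor the restriction on $G$, and with the trivial converse it already gives $(3)\Leftrightarrow(4)$.

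For $(1)\Rightarrow(3)$ and $(2)\Rightarrow(3)$ the assumption on $G$ enters exactly once, through positivity: since $G$ is compact semisimple --- so that the negative of the Killing form is positive-definite --- or $G=U(1)$, the Lie algebra $\mathfrak{g}$ carries an $\mathrm{Ad}$-invariant positive-definite inner product, whence for any $\mathrm{Ad}_P$-valued form $\eta$ the top form $|\eta|_{g_\varphi}=\eta\wedge *_\varphi\eta$ is a nonnegative multiple of the volume form that vanishes identically if and only if $\eta\equiv 0$. If $B$ is a self-dual $G$-instanton, i.e. $F_B\in\Lambda^2_{(1)}(M,\mathrm{Ad}_P)$, then $(F_B)_{(2)}=0$, so \eqref{sd} gives $2|(F_B)_{(1)}|_{g_\varphi}=0$, hence $(F_B)_{(1)}=0$ and $F_B=0$; symmetrically, if $B$ is an anti-self-dual $G$-instanton, i.e. $F_B\in\Lambda^2_{(2)}(M,\mathrm{Ad}_P)$, then $(F_B)_{(1)}=0$, so \eqref{sd} gives $|(F_B)_{(2)}|_{g_\varphi}=0$ and again $F_B=0$. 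Combining the three implications with their converses gives the asserted equivalences. I do not expect a real obstacle: the only points needing care are that \eqref{sd} must be used as a genuine pointwise identity of nonnegative top forms (not merely after integration), and that positive-definiteness of the chosen invariant pairing on $\mathfrak{g}$ --- the one place the hypothesis on $G$ is invoked --- must be applied correctly. It is worth remarking that the higher-order hypothesis is indispensable here, since on flat $\mathbb{R}^7$ there are non-flat (anti-)self-dual $U(1)$-instantons.
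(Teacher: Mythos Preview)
Your proof is correct and matches the paper's intended argument: the corollary is stated without proof precisely because it is meant to follow immediately from \eqref{sd} together with positive-definiteness of the invariant pairing on $\mathfrak{g}$, and you have spelled out exactly these details. Your observation that $(4)\Leftrightarrow(3)$ holds unconditionally (via the eigenvalue decomposition making $\beta\mapsto\beta\wedge\varphi$ injective on $\Lambda^2$) and that the higher-order hypothesis and the restriction on $G$ are needed only for $(1)\Rightarrow(3)$ and $(2)\Rightarrow(3)$ is a useful sharpening the paper leaves implicit.
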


Consider the  formal  space $\mathbb{M}_P$  consisting of higher-order $G$-instantons on $P$.
If  $a\in \Lambda^1(M,\mathrm{Ad}_P)$   satisfies the equation
\begin{align}\label{mnj}
 D_Ba\wedge F_B\wedge\varphi+ F_B\wedge D_Ba\wedge\varphi=0,
\end{align}
where $D_B=d_M+[B\wedge \bullet]$, then $a$ can be treated as a tangent  vector at $B\in\mathbb{M}_P$.
One defines a formal 2-form $\Omega$ on $ \mathbb{M}_P$ by
\begin{align}
  \Omega|_B(a_1,a_2)=\int_M \Tr(F_B\wedge (a_1\wedge a_2-a_2\wedge a_1))\wedge \varphi
\end{align}
for $a_1,a_2\in T_B\mathbb{M}_P$, and defines a formal vector field $V$ on $ \mathbb{M}_P$ as
 \begin{align*}
 V|_B=D_B\theta
 \end{align*}
for a Lie-algebra valued function $\theta$  on $M$.
\begin{proposition}
  Assume $M$ is closed.
\begin{enumerate}
    \item $\Omega$ can be viewed as a formal pre-symplectic form on  $ \mathbb{M}_P$.
    \item $V\lrcorner\Omega=0$.
  \end{enumerate}
   \end{proposition}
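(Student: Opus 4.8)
\noindent
The plan is to treat $\mathbb{M}_P$ formally as a subspace of the affine space $\mathcal{A}$ of connections on $P$, to extend each tangent vector $a\in T_B\mathbb{M}_P\subset\Lambda^1(M,\mathrm{Ad}_P)$ to the translation-invariant (constant) vector field it determines on $\mathcal{A}$, so that all Lie brackets among such fields vanish, and to use throughout the identities $\delta_aF_B=D_Ba$, the Bianchi identity $D_BF_B=0$, and the torsion-freeness $d_M\varphi=0$ of the $G_2$-structure, together with Stokes' theorem on the closed manifold $M$ and the graded cyclicity $\Tr(\mu\wedge\nu)=(-1)^{|\mu||\nu|}\Tr(\nu\wedge\mu)$ of the trace of $\mathrm{Ad}_P$-valued forms. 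Conceptually this is the moment-map picture behind \eqref{mv}: the first term of \eqref{mv} is a symplectic potential whose variation is proportional to $\Omega$, the $\mathcal{A}_0$-term is the integral over $L$ of a multiple of the moment map $\theta\mapsto\int_M\Tr(\theta\,F_B^2)\wedge\varphi$, and $\mathbb{M}_P$ is its zero locus; statements (1) and (2) are precisely what makes $\Omega$ descend to the gauge quotient of $\mathbb{M}_P$.

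\noindent
For (1): skew-symmetry is immediate because interchanging $a_1\leftrightarrow a_2$ reverses $a_1\wedge a_2-a_2\wedge a_1$, and since $M$ is closed the integrand is an integrable top form depending linearly on $a_1,a_2$, so $\Omega$ is a well-defined formal $2$-form. For closedness, since the $a_i$ are constant, $d\Omega(a_1,a_2,a_3)=a_1\!\cdot\!\Omega(a_2,a_3)-a_2\!\cdot\!\Omega(a_1,a_3)+a_3\!\cdot\!\Omega(a_1,a_2)$, and $a_i\!\cdot\!\Omega(a_j,a_k)=\int_M\Tr\big(D_Ba_i\wedge(a_j\wedge a_k-a_k\wedge a_j)\big)\wedge\varphi$ by $\delta_{a_i}F_B=D_Ba_i$. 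Writing $[ijk]:=\int_M\Tr(D_Ba_i\wedge a_j\wedge a_k)\wedge\varphi$, this becomes $d\Omega(a_1,a_2,a_3)=\big([123]+[231]+[312]\big)-\big([132]+[213]+[321]\big)$. On the other hand, applying $d_M$ to the $6$-form $\Tr(a_p\wedge a_q\wedge a_r\wedge\varphi)$, integrating over $M$, discarding the $d_M\varphi$-term, and rearranging the factors under the trace by cyclicity (keeping the Koszul sign $(-1)^{3\cdot1}=-1$ produced whenever the odd form $\varphi$ is commuted past a $1$-form) yields exactly the cyclic relation $[pqr]+[qrp]+[rpq]=0$. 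Taking $(p,q,r)=(1,2,3)$ and $(1,3,2)$ annihilates both parenthesized sums, so $d\Omega=0$ already on $\mathcal{A}$, and a fortiori $d(\Omega|_{\mathbb{M}_P})=0$; hence $\Omega$ is a formal pre-symplectic form.

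\noindent
For (2): first, $V|_B=D_B\theta$ is genuinely tangent to $\mathbb{M}_P$, since the linearization of the defining equation $F_B^2\wedge\varphi=0$ in the direction $D_B\theta$ is $\big(D_B^2\theta\wedge F_B+F_B\wedge D_B^2\theta\big)\wedge\varphi=[F_B^2,\theta]\wedge\varphi=[\,F_B^2\wedge\varphi,\theta\,]=0$, using $D_B^2=\mathrm{ad}_{F_B}$ and $F_B^2\wedge\varphi=0$. Now fix $a\in T_B\mathbb{M}_P$ and write $V\lrcorner\Omega|_B(a)=\Omega|_B(D_B\theta,a)=\int_M\Tr\big(F_B\wedge(D_B\theta\wedge a-a\wedge D_B\theta)\big)\wedge\varphi$. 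Applying Stokes to the $6$-forms $\Tr(F_B\wedge\theta\wedge a\wedge\varphi)$ and $\Tr(F_B\wedge a\wedge\theta\wedge\varphi)$, where $D_BF_B=0$ and $d_M\varphi=0$ remove two of the four terms in each expansion, and then moving the $0$-form $\theta$ to the front of the trace by cyclicity, one reduces the expression to $V\lrcorner\Omega|_B(a)=-\int_M\Tr\big(\theta\wedge(D_Ba\wedge F_B+F_B\wedge D_Ba)\big)\wedge\varphi$. Its integrand vanishes identically by the tangent-vector condition \eqref{mnj}, $D_Ba\wedge F_B\wedge\varphi+F_B\wedge D_Ba\wedge\varphi=0$, and therefore $V\lrcorner\Omega=0$. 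For $G=U(1)$ the same argument collapses to $\Omega(d\theta,a)=2\int_M dB\wedge d\theta\wedge a\wedge\varphi=-2\int_M\theta\,dB\wedge da\wedge\varphi=0$, the last equality being the linearized higher-order instanton equation $dB\wedge da\wedge\varphi=0$.

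\noindent
The main obstacle is the sign bookkeeping in (1): $\Omega$ is not a constant $2$-form (unlike the Atiyah-Bott form in real dimension $2$), so its closedness is genuine rather than formal, and unless the Koszul signs governing the graded cyclicity of $\Tr$ and the commutation of the odd form $\varphi$ past the $a_i$ are tracked exactly, the Stokes relations do not combine to $0$. Both (1) and (2) rest essentially on $d_M\varphi=0$ and on $M$ being closed; relaxing either reinstates torsion or boundary contributions and the conclusions fail.
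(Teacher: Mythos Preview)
Your proof is correct and follows essentially the same strategy as the paper: both parts rest on Stokes' theorem for the closed $M$, $d_M\varphi=0$, the Bianchi identity, and graded cyclicity of the trace, with the tangent condition \eqref{mnj} killing the residual term in (2). Your treatment of (1) is in fact a cleaner packaging of the paper's argument---where the paper expands $F_B=d_MB+B^2$ and writes out all eighteen terms to recognize a total derivative, you isolate the single cyclic identity $[pqr]+[qrp]+[rpq]=0$ and apply it twice; one small slip is the parenthetical about ``commuting the odd form $\varphi$ past a $1$-form'': $\varphi$ is scalar-valued and sits outside the trace, so the relevant signs come only from graded cyclicity among the $\mathrm{Ad}_P$-valued factors $D_Ba_i,a_j,a_k$, but your computation of those signs is nonetheless correct.
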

  \begin{proof} (1) Let $\mathfrak{d}$ denote the  exterior differential operator on  $ \mathbb{M}_P$. Then for any $a_i\in T_B\mathbb{M}_P$, $i=1,2,3$,  we calculate at $B\in\mathbb{M}_P$
  \begin{align*}
  &(\mathfrak{d}\Omega)|_B(a_1,a_2,a_3)\\
  =&\int_M \Tr(d_Ma_1\wedge a_2\wedge a_3)\wedge \varphi
  -\int_M \Tr(d_Ma_2\wedge a_1\wedge a_3)\wedge \varphi
  +\int_M \Tr(d_Ma_3\wedge a_1\wedge a_2)\wedge \varphi\\
  &-\int_M \Tr(d_Ma_1\wedge a_3\wedge a_2)\wedge \varphi
  +\int_M \Tr(d_Ma_2\wedge a_3\wedge a_1)\wedge \varphi
  -\int_M\Tr(d_Ma_3\wedge a_2\wedge a_1)\wedge \varphi\\
  &+\int_M \Tr(B\wedge a_1\wedge a_2\wedge a_3)\wedge \varphi+\int_M \Tr(a_1\wedge B\wedge a_2\wedge a_3)\wedge \varphi-\int_M \Tr(B\wedge a_1\wedge a_3\wedge a_2)\wedge \varphi\\
  &-\int_M \Tr(a_1\wedge B\wedge a_3\wedge a_2)\wedge \varphi-\int_M \Tr(B\wedge a_2\wedge a_1\wedge a_3)\wedge \varphi-\int_M \Tr(a_2\wedge B \wedge a_1\wedge a_3)\wedge \varphi\\
  &+\int_M \Tr(B\wedge a_2\wedge a_3\wedge a_1)\wedge \varphi+\int_M \Tr(a_2\wedge B \wedge a_3\wedge a_1)\wedge \varphi+\int_M \Tr( B \wedge a_3\wedge a_1\wedge a_2)\wedge \varphi\\
  &+\int_M \Tr( a_3 \wedge B\wedge a_1\wedge a_2)\wedge \varphi-\int_M \Tr(B\wedge a_3\wedge a_2\wedge a_1)\wedge \varphi-\int_M \Tr(a_3\wedge B \wedge a_2\wedge a_1)\wedge \varphi\\
  =&\ \int_M \Tr(d_M((a_1\wedge a_2-a_2\wedge a_1)\wedge a_3))\wedge \varphi\\
  =&\ 0,
\end{align*}
which means that
  $\mathfrak{d}\Omega=0$, i.e. $\Omega$ is  a formal pre-symplectic form on  $ \mathbb{M}_P$.

(2)  For $a\in T_B\mathbb{M}$, we have
\begin{align*}
 (V\lrcorner\Omega)|_B(a)&=\Omega|_B(D_B\theta,a)\\
&=\int_M\Tr(F_B\wedge (D_B\theta\wedge a-a\wedge D_B\theta))\wedge \varphi\\
&=\int_M\Tr(d_M(F_B\wedge(\theta a+a\theta) ))\wedge \varphi-\int_M\Tr(F_B\wedge (\theta D_B a+D_Ba\theta))\wedge \varphi\\
&=0,
\end{align*}
where the last equality is as the result of $a$ satisfying \eqref{mnj}. Therefore, $V\lrcorner\Omega=0$.
  \end{proof}

From now on, we always assume $G=U(1)$.

 \begin{theorem}\label{th}
  If the connection $B$ is   a self-dual higher-order $U(1)$-instanton,  then $B$ is   a   higher-order flat $U(1)$-instanton.

 \end{theorem}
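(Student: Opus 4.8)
The plan is to show that the self-duality assumption forces the $1$-form $\alpha$ with $F_B^2=\alpha\wedge\varphi$ to vanish identically; the argument I have in mind is pointwise and purely linear-algebraic, so no global hypothesis on $M$ is needed. Since $G=U(1)$ the bundle $\mathrm{Ad}_P$ is trivial, so $F_B$ is an ordinary $2$-form, $F_B^2=F_B\wedge F_B$ an ordinary $4$-form, and the self-duality hypothesis $F_B^2\in\Lambda^4_{(2)}(M)$ just says $F_B^2=\alpha\wedge\varphi$ for some $\alpha\in\Lambda^1(M)$.

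I would fix a point $p$, assume $\alpha_p\neq 0$, and aim for a contradiction; this forces $\alpha\equiv 0$, hence $F_B^2=0$, i.e. $B$ is a higher-order flat $U(1)$-instanton. Choosing $X\in T_pM$ with $\alpha(X)=1$ and setting $H=\ker\alpha\subset T_pM$ (a $6$-plane), I would contract $F_B\wedge F_B=\alpha\wedge\varphi$ with $X$; since $\iota_X$ is an antiderivation of degree $-1$ this gives
\[
2\,(\iota_X F_B)\wedge F_B=\varphi-\alpha\wedge\iota_X\varphi .
\]
Pulling this back to $H$ annihilates $\alpha\wedge\iota_X\varphi$ (because $\alpha|_H=0$), so with $\theta:=(\iota_X F_B)|_H\in\Lambda^1 H$ one obtains $\varphi|_H=2\,\theta\wedge(F_B|_H)$, whence $\theta\wedge\varphi|_H=0$.

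The decisive input is then that $\varphi|_H$ is a \emph{stable} (``$\mathrm{SL}(3,\mathbb{C})$-type'') $3$-form on the $6$-dimensional space $H$: it is nonzero and has trivial annihilator, i.e. $\beta\wedge(\varphi|_H)\neq 0$ for every nonzero $\beta\in\Lambda^1 H$. I would verify this directly for $H=\{x^7=0\}$ from \eqref{opi}, where $\varphi_0|_H=e^{123}+e^{145}+e^{246}-e^{356}$: expanding $\big(\sum_i a_i e^i\big)\wedge\varphi_0|_H$ one checks that each $a_i$ occurs as the coefficient of some basis $4$-form, so this wedge product vanishes only when all $a_i=0$; the general case then follows because $G_2$ acts transitively on the unit sphere of $(T_pM,g_\varphi)$ and fixes $\varphi$, so any hyperplane is $G_2$-equivalent to $\{x^7=0\}$. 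Applying this to $\theta$ gives $\theta=0$, and then $\varphi|_H=2\,\theta\wedge(F_B|_H)=0$, contradicting $\varphi|_H\neq 0$. Hence $\alpha=0$ and $F_B^2=0$. The only step needing genuine care is this stability fact for hyperplane restrictions of $\varphi$ (handled by the annihilator computation above, or by quoting the classification of $3$-forms in six variables); the contraction identity and its restriction to $H$ are routine.
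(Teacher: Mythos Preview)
Your argument is correct. Both you and the paper proceed pointwise, assume $\alpha_p\neq 0$, and invoke the transitivity of $G_2$ on $S^6$ to normalise the situation; but the mechanisms for reaching a contradiction differ.

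The paper sets $\alpha|_p=e^1$, writes $F_B|_p=e^1\wedge a+b$ with $a\in\mathrm{span}\{e^2,\ldots,e^7\}$ and $b$ free of $e^1$, and then works through three explicit coefficient computations: first, contracting with $e_1$ yields $a\wedge b$ equal to a specific $3$-form; second, one checks that each $a_i$ must be nonzero; third, one extracts four linear relations among the $b_{ij}/(a_ia_j)$ that are mutually inconsistent. It is a direct but component-heavy calculation.

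Your route is more structural: after contracting with $X$ and restricting to $H=\ker\alpha$, the whole problem collapses to the single fact that the restriction of $\varphi$ to any hyperplane is a $3$-form with trivial wedge-annihilator (equivalently, an $\mathrm{SL}(3,\mathbb{C})$-stable $3$-form on a $6$-space). This gives $\theta=0$ and hence $\varphi|_H=0$, an immediate contradiction. The verification that $\varphi_0|_{\{x^7=0\}}=e^{123}+e^{145}+e^{246}-e^{356}$ has trivial annihilator is straightforward (each $a_i$ is the sole coefficient of some basis $4$-form), so the ``genuine care'' you flag is minimal. Your approach is shorter and isolates the geometric content more clearly; the paper's approach has the virtue of being entirely self-contained and not appealing to any stability notion.
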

\begin{proof}  We need to show  $F_B^2=0$. Assume $0\neq F_B^2\in\Lambda_{(2)}^4(M)$,  then by definition  there exists nonzero $\alpha\in\Lambda^1(M)$
                          such that \begin{align*}
                                     F_B^2=\alpha\wedge \varphi.
                                    \end{align*}
                          Pick   a point $p\in M$ such that $F_B|_p\neq 0$, and let $U$ be an open neighbourhood of $p$ with a local coordinate system such that $\varphi|_p=\varphi_0$. Since $G_2$ acts  transitively  on $S^6$, by a suitable $G_2$-action, we can put $\alpha|_p=ce^1$ for some constant $c\neq 0$. Without loss of generality, we assume $c=1$.
                           Then we have
                                         \begin{align}\label{fff}
                                           (F_B|_p)^2=-e^{1246}-e^{1347}+e^{1257}-e^{1356}.
                                         \end{align}
                                  We show that the above equality cannot occur by the following steps.

                          Step 1: Write $F_B|_p=e^1\wedge a+b$ with $a=\sum_{i\geq 2}a_ie^i$ and $b=\sum_{i<j, i\geq 2}b_{ij}e^{ij}$, then
                          \begin{align}\label{fg}
                            e^1\lrcorner (F_B|_p)^2=a\wedge b=-e^{246}-e^{347}+e^{257}-e^{356}.
                          \end{align}

                         Step 2: We claim all $a_i$'s do not vanish. Assume $a_2=0$, and  write $b=e^2\wedge d+s$ with $d=\sum_{i\geq3}d_ie^i,s=\sum_{i<j, i\geq 3}s_{ij}e^{ij}$, then
                          \begin{align*}
                            e^2\lrcorner(a\wedge b)=-a\wedge d=-e^{46}+e^{57}.
                          \end{align*}
                        It follows from  that
                        \begin{itemize}
                          \item $a_i\neq 0, d_i\neq 0$ for $i=4,5,6,7$,
                          \item $a_4d_5= a_5d_4,\ a_4d_7=a_7d_4,\ a_5d_7-a_7d_5=-1$.
                        \end{itemize}
However, one easily checks that the above two conditions are not compatible.  This means that $a_2$ dose not vanish. Similarly, we can find that $a_i\neq 0$ for $i=3,\cdots,7$.

Step 3: It follows from the equation \eqref{fg} that
\begin{align*}
  \frac{b_{45}}{a_4a_5}-\frac{b_{35}}{a_3a_5}+\frac{b_{34}}{a_3a_4}&=0,\\
  \frac{b_{46}}{a_4a_6}-\frac{b_{36}}{a_3a_6}+\frac{b_{34}}{a_3a_4}&=0,\\
  \frac{b_{56}}{a_5a_6}-\frac{b_{46}}{a_4a_6}+\frac{b_{45}}{a_4a_5}&=0,\\
  \frac{b_{56}}{a_5a_6}-\frac{b_{36}}{a_3a_6}+\frac{b_{35}}{a_3a_5}&=-1.
\end{align*}
One finds that the fist three equalities imply $\frac{b_{56}}{a_5a_6}-\frac{b_{36}}{a_3a_6}+\frac{b_{35}}{a_3a_5}=0$, which contradict with the last one.

In conclusion, nonzero $F_B^2$ does not lie in $\Lambda_{(2)}^4(M)$.
\end{proof}

\begin{lemma}\label{mkl}There do not exist a nonzero 1-form $\alpha\in \Lambda^1(M)$ and a nonzero 2-form $\beta\in \Lambda^2(M)$ such that
$\alpha\wedge \beta\in \Lambda_{(1)}^3(M)\oplus \Lambda_{(2)}^3(M)$.
\end{lemma}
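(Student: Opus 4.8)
The plan is to reduce the statement to pointwise linear algebra on $(\mathbb{R}^7,\varphi_0)$ and use that $G_2=\mathrm{Aut}(\varphi_0)\subset SO(7)$ acts transitively on $S^6$ (as already used in the proof of Theorem~\ref{th}). Suppose $\gamma:=\alpha\wedge\beta$ is a nonzero section of $\Lambda^3_{(1)}(M)\oplus\Lambda^3_{(2)}(M)$, fix a point $p$ with $\gamma|_p\neq0$, so that in particular $\alpha|_p\neq0$. Since $G_2$ fixes $\varphi_0$ and $\psi_0:=*_{\varphi_0}\varphi_0$, hence the splitting $\Lambda^3=\Lambda^3_{(1)}\oplus\Lambda^3_{(2)}\oplus\Lambda^3_{(3)}$, after rescaling and applying a suitable element of $G_2$ I may assume $\varphi|_p=\varphi_0$ and $\alpha|_p=e^1$. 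The hypothesis at $p$ then reads
\[
 e^1\wedge\beta|_p \;=\; f\,\varphi_0+v\lrcorner\psi_0
\]
for some $f\in\mathbb{R}$ and $v=v^1e_1+v'$, where I write $\mathbb{R}^7=\mathbb{R}e_1\oplus V'$ with $V':=\langle e_2,\dots,e_7\rangle$ and $v'\in V'$.

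Next I would split everything along $\mathbb{R}^7=\mathbb{R}e_1\oplus V'$. From \eqref{opi} and a routine computation of $\psi_0=*_{\varphi_0}\varphi_0$ one obtains
\[
 \varphi_0=e^1\wedge\sigma+\tau,\qquad \psi_0=e^1\wedge\mu+\nu,
\]
with $\sigma=e^{23}+e^{45}-e^{67}$, $\tau=e^{246}+e^{347}+e^{257}-e^{356}$, $\mu=e^{357}+e^{256}+e^{346}-e^{247}$ and $\nu=e^{4567}+e^{2367}-e^{2345}=-\tfrac12\,\sigma\wedge\sigma$. Here $\sigma$ is a symplectic form on $V'$ and $\rho:=\tau+\sqrt{-1}\,\mu=(e^2+\sqrt{-1}\,e^3)\wedge(e^4+\sqrt{-1}\,e^5)\wedge(e^6-\sqrt{-1}\,e^7)$ is the complex volume form of the $SU(3)$-structure induced on $V'$. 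Since $e^1\wedge\beta|_p$ is divisible by $e^1$, equating the parts of the displayed identity that do not involve $e^1$ leaves the single equation
\[
 f\,\tau+v^1\mu+v'\lrcorner\nu=0,\qquad\text{i.e.}\qquad \mathrm{Re}\!\big((f-\sqrt{-1}\,v^1)\,\rho\big)=(v'\lrcorner\sigma)\wedge\sigma
\]
in the space of $3$-forms on $V'$.

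The key observation is a Hodge-type dichotomy: the left-hand side has type $(3,0)+(0,3)$, while $v'\lrcorner\sigma$ is a real $1$-form and $\sigma$ has type $(1,1)$, so $(v'\lrcorner\sigma)\wedge\sigma$ has type $(2,1)+(1,2)$; these subspaces intersect only in $0$, hence both sides vanish. Now $(f-\sqrt{-1}\,v^1)\rho=0$ with $\rho\neq0$ of pure type $(3,0)$ forces $f=v^1=0$, and $(v'\lrcorner\sigma)\wedge\sigma=0$ together with the injectivity of $\eta\mapsto\eta\wedge\sigma$ on $1$-forms (valid since $\sigma$ is non-degenerate — hard Lefschetz, or: complete a putative kernel vector to a Darboux basis) forces $v'\lrcorner\sigma=0$, hence $v'=0$. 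Therefore $f=0$ and $v=0$, so $e^1\wedge\beta|_p=f\,\varphi_0+v\lrcorner\psi_0=0$, contradicting $\gamma|_p\neq0$. (One can bypass the $SU(3)$ language: the four monomials of $\tau$, the four of $\mu$ and the twelve occurring in $v'\lrcorner\nu$ together exhaust, without repetition, the $\binom{6}{3}=20$ basis monomials $e^{ijk}$ with $2\le i<j<k\le7$, so in coordinates the displayed equation reads $f=v^1=\dots=v^7=0$ at once.)

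This is essentially bookkeeping; the only genuine inputs are the transitivity of $G_2$ on $S^6$ and the structural point that stripping the $e^1$-direction out of $v\lrcorner\psi_0$ produces the Lefschetz-type term $(v'\lrcorner\sigma)\wedge\sigma$, of type $(2,1)+(1,2)$, which therefore cannot cancel the $(3,0)+(0,3)$ contribution $f\,\tau+v^1\mu$ — getting this right is really the whole content of the proof. I would also point out that the statement as literally phrased is slightly too strong, since $e^1\wedge(e^1\wedge e^2)=0\in\Lambda^3_{(1)}$ with both factors nonzero; so what is established, and what is needed in the sequel, is that $\alpha\wedge\beta\in\Lambda^3_{(1)}(M)\oplus\Lambda^3_{(2)}(M)$ forces $\alpha\wedge\beta\equiv0$.
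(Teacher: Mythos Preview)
Your proof is correct and takes a genuinely different route from the paper's. The paper argues by a four-way case split on whether the $\Lambda^3_{(1)}$ and $\Lambda^3_{(2)}$ components of $\alpha\wedge\beta$ are present, keeps $\alpha|_p=\sum a_ie^i$ as a general $1$-form, shows by elimination that every coefficient $a_i$ must be nonzero, and then derives a contradictory linear system among the ratios $b_{ij}/(a_ia_j)$, mirroring the computation in Theorem~\ref{th}. You instead use the $G_2$-action on $S^6$ to normalize $\alpha|_p=e^1$ and then exploit the induced $SU(3)$-structure on $V'=\langle e_2,\dots,e_7\rangle$: stripping off $e^1$ leaves the single equation $f\tau+v^1\mu=(v'\lrcorner\sigma)\wedge\sigma$ in $\Lambda^3V'$, and the Hodge-type dichotomy $(3,0)+(0,3)$ versus $(2,1)+(1,2)$ finishes it in one stroke. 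Your argument is cleaner and more conceptual --- it explains \emph{why} the result is true rather than verifying it by computation --- while the paper's approach has the advantage of being entirely elementary, requiring no complex-structure bookkeeping. Your closing remark is also well taken: what is actually proved (in both arguments) is that $\alpha\wedge\beta\in\Lambda^3_{(1)}(M)\oplus\Lambda^3_{(2)}(M)$ forces $\alpha\wedge\beta\equiv0$, since trivially decomposable examples like $e^1\wedge(e^1\wedge e^2)=0$ show the literal statement with ``nonzero $\alpha$, nonzero $\beta$'' is too strong.
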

\begin{proof}Assume there exist such forms $\alpha,\beta$. One can always find a point $p\in M$ with $\alpha|_p\neq 0,\beta|_p\neq 0$, then
\begin{align*}
  \alpha|_p\wedge \beta|_p=k\varphi|_p+X\lrcorner(*_\varphi\varphi)|_p,
\end{align*}
where the   constant $k$ and the vector $X\in TM|_p$ cannot vanish simultaneously. Then  we can assume
\begin{itemize}
  \item Case I: $\alpha|_p\wedge \beta|_p= e^{123}+e^{145}+e^{246}+e^{347}-e^{167}+e^{257}-e^{356}+e^{357 }+ e^{256}+ e^{346}- e^{247}$,
  \item Case II: $\alpha|_p\wedge \beta|_p= e^{123}+e^{145}+e^{246}+e^{347}-e^{167}+e^{257}-e^{356}-e^{357 }- e^{256}-e^{346}+ e^{247}$,
\item Case III: $\alpha|_p\wedge \beta|_p= e^{123}+e^{145}+e^{246}+e^{347}-e^{167}+e^{257}-e^{356}$,
  \item Case IV: $\alpha|_p\wedge \beta|_p= e^{357 }+ e^{256}+ e^{346}- e^{247}$.
\end{itemize}

We follow the same arguments as in the proof of Theorem \ref{th}. For the first three cases, write $\alpha|_p=\sum_{i=1}^7a_ie^i$ and  $\beta|_p=\sum_{i<j}b_{ij}e^{ij}$.
To show $a_1\neq0$, we  only need to show there are no $c,d\in \Lambda^1(M)|_p$ such that
\begin{align*}
  c\wedge d=e^{23}+e^{45}-e^{67}.
\end{align*}
Also write $c=\sum_{i\geq 2}c_ie^i, d=\sum_{i\geq 2}d_ie^i$ with nonzero $c_i,d_i$, the above claim follows from that there are  no solutions for the following equations $c_2d_4=c_4d_2, c_3d_4=c_4d_3, c_2d_3-c_3d_2=1$. Similarly,  non-vanishing of $a_2$ is implied by the fact that  there are no $c,d\in \Lambda^1(M)|_p$ such that
\begin{align*}
  c\wedge d=-e^{13}+e^{46}+e^{57}+\delta(e^{56}-e^{47})
\end{align*}
for $\delta=\left\{
              \begin{array}{ll}
                1, & \hbox{Case I,} \\
                -1, & \hbox{Case II,}\\
0,& \hbox{Case III.}
              \end{array}
            \right.
$  Therefore,   all $a_i,i=1,\cdots,7$, are nonzero.
Then we have
\begin{align*}
  \frac{b_{23}}{a_2a_3}-\frac{b_{13}}{a_1a_3}+\frac{b_{12}}{a_1a_2}&=1,\\
  \frac{b_{24}}{a_2a_4}-\frac{b_{14}}{a_1a_4}+\frac{b_{12}}{a_1a_2}&=0,\\
 \frac{b_{34}}{a_3a_4}-\frac{b_{14}}{a_1a_4}+\frac{b_{13}}{a_1a_3}&=0,\\
  \frac{b_{34}}{a_3a_4}-\frac{b_{24}}{a_2a_4}+\frac{b_{23}}{a_2a_3}&=0,
\end{align*}
which admit no solutions. The forth case has been appeared in the proof of Theorem \ref{th}. Consequently, such 1-forms $\alpha,\beta$ satisfying our assumption do not exist.
\end{proof}

Inspired by Theorem \ref{th} and Lemma \ref{mkl}, we introduce  the following definition.
\begin{definition}A 1-form $B\in\Lambda^1(M)$ (as a connection on the trivial $U(1)$-prinicipal bundle P over $M$) is called a special  higher-order $U(1)$-instanton if $B\wedge d_MB\in \Lambda^3_{(3)}(M)$. In particular,  $B$ is called a trivial  special  higher-order $U(1)$-instanton if $B\wedge d_MB=0$.
\end{definition}

\begin{theorem}\label{thh}Assume $B$ is    a special  higher-order $U(1)$-instanton, then
\begin{enumerate}
  \item $B$ is an anti-self-dual higher-order $U(1)$-instanton,
  \item $B$ is a closed 1-form iff $B\wedge *_\varphi d_MB=0$.
\end{enumerate}
\end{theorem}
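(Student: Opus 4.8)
The plan uses from the outset that for $G=U(1)$ the gauge group is abelian, so $B\wedge B=0$, $F_B=d_MB$, and hence $F_B^2=d_MB\wedge d_MB=d_M(B\wedge d_MB)$ is an \emph{exact} $4$-form; I will also use freely the torsion-free conditions $d\varphi=0$, $d(*_\varphi\varphi)=0$ (equivalently $\nabla\varphi=\nabla(*_\varphi\varphi)=0$ and $\delta\varphi=\delta(*_\varphi\varphi)=0$), the pointwise dualities $\varphi\wedge\eta=\langle *_\varphi\varphi,\eta\rangle\,\mathrm{vol}$ for $\eta\in\Lambda^4$ and $*_\varphi\varphi\wedge\gamma=\langle\varphi,\gamma\rangle\,\mathrm{vol}$ for $\gamma\in\Lambda^3$, the relation $*_\varphi^2=\mathrm{id}$ on $2$- and $3$-forms, and the contraction identity $\iota_{X}(*_\varphi\mu)=(-1)^{\deg\mu}\,*_\varphi\!\big(X^\flat\wedge\mu\big)$.

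\emph{Part (1).} Write $\beta:=B\wedge d_MB$; being special means $\beta\in\Gamma(\Lambda^3_{(3)}(M))$, i.e. $\varphi\wedge\beta=0$ and $*_\varphi\varphi\wedge\beta=0$ (equivalently $\langle\beta,\varphi\rangle\equiv0$) everywhere. To conclude $F_B^2=d_M\beta\in\Gamma(\Lambda^4_{(3)}(M))$ I must check $\varphi\wedge F_B^2=0$ and $\varphi\wedge *_\varphi F_B^2=0$. The first is immediate: $d_M(\varphi\wedge\beta)=0$ and $d\varphi=0$ give $\varphi\wedge d_M\beta=0$, i.e. $\langle F_B^2,*_\varphi\varphi\rangle=0$, so the $\Lambda^4_{(1)}$-component vanishes. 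The second — i.e. the $\Lambda^4_{(2)}$-component of $F_B^2$ vanishes, equivalently $\langle F_B^2,\alpha\wedge\varphi\rangle=0$ for all $1$-forms $\alpha$ — is where the work is. Assuming $M$ closed, I would integrate by parts, $\langle F_B^2,\alpha\wedge\varphi\rangle_{L^2}=\langle d_M\beta,\alpha\wedge\varphi\rangle_{L^2}=\langle\beta,\delta(\alpha\wedge\varphi)\rangle_{L^2}$, expand $\delta(\alpha\wedge\varphi)$ through the Leibniz-type formula for the codifferential (where $\nabla\varphi=0$ and $\delta\varphi=0$ leave only $(\delta\alpha)\varphi$ together with a term built from $\nabla\alpha$ contracted against $\varphi$), note that the $(\delta\alpha)\varphi$ piece pairs to zero against $\beta$ because $\langle\beta,\varphi\rangle\equiv0$, and then use the \emph{specific} form $\beta=B\wedge d_MB$ together with $d_M(d_MB)=0$ to dispose of the remaining term; vanishing against all $\alpha$ then forces the $\Lambda^4_{(2)}$-component of $F_B^2$ to vanish pointwise. (A direct pointwise argument — fixing $p$, normalizing $\varphi|_p=\varphi_0$ and, via transitivity of $G_2$ on $S^6$, $B|_p\propto e^1$, as in the proof of Theorem \ref{th} — would still have to use closedness of $d_MB$ essentially, since at a single point $e^1\wedge(e^{23}-e^{45})\in\Lambda^3_{(3)}$ whereas $(e^{23}-e^{45})^2=-2\,e^{2345}\notin\Lambda^4_{(3)}$.)

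\emph{Part (2).} "Only if" is trivial. For "if", set $\omega:=d_MB$ and assume $B\wedge *_\varphi\omega=0$. Contracting this $6$-form with $B^\sharp$ and using $\iota_{B^\sharp}B=|B|^2$ gives $|B|^2\,*_\varphi\omega=B\wedge(\iota_{B^\sharp}*_\varphi\omega)$; by the contraction identity $\iota_{B^\sharp}*_\varphi\omega=*_\varphi(B\wedge\omega)=*_\varphi\beta$ with $\beta=B\wedge d_MB$, so $|B|^2*_\varphi\omega=B\wedge *_\varphi\beta$; applying $*_\varphi$ and the contraction identity once more turns this into $|B|^2\omega=\iota_{B^\sharp}\beta=|B|^2\omega-B\wedge\iota_{B^\sharp}\omega$, hence $B\wedge\iota_{B^\sharp}\omega=0$, and since $\iota_{B^\sharp}\iota_{B^\sharp}\omega=0$ this yields $\iota_{B^\sharp}\omega=0$ on $\{B\neq0\}$. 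Now bring in that $B$ is special: from $\iota_{B^\sharp}\omega=0$ one has $\iota_{B^\sharp}\beta=|B|^2\omega$, so it is enough to prove $\beta=0$, because $B\wedge\omega=0$ together with $\iota_{B^\sharp}\omega=0$ forces $\omega=0$ (write $\omega=B\wedge\xi$, contract with $B^\sharp$, conclude $\xi\parallel B$). To get $\beta=0$ I would use $d_M\omega=0$, whence $d_M\beta=\omega^2=F_B^2$, invoke part (1) (so $F_B^2$ is anti-self-dual, in particular $\langle F_B^2,*_\varphi\varphi\rangle=0$, and $B$ is a higher-order $U(1)$-instanton, which also opens the route through Corollary \ref{mk}), and close with a positivity/Stokes argument on the closed manifold $M$ applied to $\int_M\langle\beta,\beta\rangle$ (or to $\int_M B\wedge *_\varphi\beta\wedge(\cdots)$), possibly after the pointwise normalization of part (1).

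The main obstacle, in both parts, is controlling the $7$-dimensional $G_2$-representation summands — $\Lambda^4_{(2)}$ in (1) and the implication $\iota_{B^\sharp}d_MB=0\Rightarrow B\wedge d_MB=0$ in (2): the naive pointwise linear algebra is \emph{not} enough, so closedness of $d_MB$ — equivalently exactness of $F_B^2$ together with $d\varphi=d(*_\varphi\varphi)=0$ — has to enter essentially, and I expect the cleanest argument to be global, via Stokes' theorem on $M$, rather than pointwise.
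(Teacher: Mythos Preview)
Your Part (1) begins correctly — the $\Lambda^4_{(1)}$-component of $F_B^2$ dies by differentiating $\varphi\wedge\beta=0$ — but your treatment of the $\Lambda^4_{(2)}$-component is only a plan, not a proof: you never say what the ``remaining term'' $\sum_i(\nabla_{e_i}\alpha)\wedge(\iota_{e_i}\varphi)$ actually is, nor how $\beta=B\wedge d_MB$ and $d_M(d_MB)=0$ make its pairing with $\beta$ vanish. The paper instead argues pointwise: since $\beta\in\Lambda^3_{(3)}$ one has $*_\varphi\beta\in\Lambda^4_{(3)}$, and parallelness of $\varphi$ is used to pass from $(*_\varphi\beta)\wedge\varphi=0$ to $d_M^{\dagger_\varphi}(*_\varphi\beta)\wedge\varphi=0$, which equals $*_\varphi(d_MB\wedge d_MB)\wedge\varphi$. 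Whether or not you find that step fully transparent, your integration-by-parts route does not yet close.

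The real gap is in Part (2). Your chain of contraction identities is correct but vacuous: on $\{B\neq 0\}$, the hypothesis $B\wedge *_\varphi\omega=0$ is \emph{pointwise equivalent} to $\iota_{B^\sharp}\omega=0$ (both say $*_\varphi\omega$ is divisible by $B$), so after all that work you have only restated the assumption and have not yet used the special condition at all. You then set yourself the goal of proving $\beta=B\wedge d_MB=0$, but at a point with $B=e^1$ and $\iota_{e_1}\omega=0$ the constraints $\beta\wedge\varphi=0$, $\beta\wedge *_\varphi\varphi=0$ impose only $7$ linear conditions on the $15$-dimensional space $\Lambda^2(e^1)^\perp$, so $\beta=0$ cannot be obtained pointwise, and your ``positivity/Stokes'' gesture is not an argument.

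The paper's route is both different and decisive: decompose $d_MB=C_1+C_2$ with $C_1=(d_MB)_{(1)}=X\lrcorner\varphi\in\Lambda^2_{(1)}$ and $C_2\in\Lambda^2_{(2)}$. Using $C_1\wedge\varphi=-2*_\varphi C_1$, $C_2\wedge\varphi=*_\varphi C_2$, $*_\varphi C_1=*_\varphi\varphi\wedge X^\flat$, and $C_1\wedge *_\varphi\varphi=3*_\varphi X^\flat$, the two special conditions become
\[
3\,B\wedge *_\varphi\varphi\wedge X^\flat \;=\; B\wedge *_\varphi d_MB,
\qquad
B\wedge *_\varphi X^\flat \;=\; 0.
\]
Assuming $B\wedge *_\varphi d_MB=0$, the first forces (at a point with $B|_p=e^1$, $\varphi|_p=\varphi_0$) that $X^\flat|_p$ is proportional to $e^1$, and the second then forces $X^\flat|_p=0$; hence $C_1\equiv 0$ and $d_MB\in\Lambda^2_{(2)}$, i.e.\ $B$ is an anti-self-dual $U(1)$-instanton. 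Now Part (1) (even just its easy half, $F_B^2\wedge\varphi=0$) says $B$ is a higher-order $U(1)$-instanton, and Corollary~\ref{mk} gives $d_MB=0$. The key idea you are missing is that one should not aim for $\beta=0$ directly, but rather kill only the $7$-dimensional piece $C_1$ of $d_MB$; the remaining $14$-dimensional piece $C_2$ is then handled by the instanton dichotomy of Corollary~\ref{mk}.
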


\begin{proof}(1) By definition,  $B$ satisfies the conditions
\begin{align}
 B\wedge d_MB\wedge\varphi&=0, \label{cz}\\
  *_\varphi (B\wedge d_MB)\wedge\varphi&=0,\label{czz}
\end{align}
The first one immediately implies $d_MB\wedge d_MB\wedge\varphi=0$, and since $\varphi$ is  parallel with respect to the Levi-Civita connection of $g_\varphi$, the second one leads to $d_M^{\dagger_\varphi}(*_\varphi (B\wedge d_MB))\wedge\varphi=*_\varphi(d_B\wedge d_MB)\wedge\varphi=0$, where
$d_M^{\dagger_\varphi}$ is the formal adjoint of $d_M$ with respect to the metric $g_\varphi$. Therefore, $B$ is an anti-self-dual higher-order $U(1)$-instanton.

(2) Decomposing  $d_MB=C_1+C_2$ with $C_i=(d_MB)_{(i)}$, \eqref{cz} is rewritten as
\begin{align*}
  B\wedge d_MB\wedge \varphi&=B\wedge (-2*_\varphi C_1+*_\varphi C_2)\\
&=-3B\wedge *_\varphi C_1+B\wedge *_\varphi d_MB\\
&=0,
\end{align*}
and  \eqref{czz}  is rewritten as
\begin{align*}
B\wedge  d_MB\wedge *_\varphi\varphi=B\wedge C_1\wedge*_\varphi\varphi=0.
\end{align*}
On the other hand, since $C_1=X\lrcorner\varphi$ for some vector field $X$ on $M$, we have the identities
\begin{align*}
 C_1\wedge  *_\varphi\varphi=3*_\varphi X^\vee,\ \  *_\varphi C_1=*_\varphi\varphi\wedge X^\vee,
\end{align*}where $X^\vee$ is the dual 1-form of $X$.
Consequently,
 $B$ is  a special  higher-order $U(1)$-instanton iff $B$ satisfies the conditions
\begin{align*}
  3B \wedge *_\varphi\varphi\wedge X^\vee&=B\wedge*_\varphi d_MB,\\
B\wedge *_\varphi X^\vee&=0,
\end{align*}
Thus, if $B\wedge*_\varphi d_MB=0$, we have
\begin{align}
  B \wedge *_\varphi\varphi\wedge X^\vee&=0,\label{bn}\\
B\wedge *_\varphi X^\vee&=0,\label{bnn}
\end{align}
At a point $p\in M$, we assume $B|_p=e^1, \varphi|_p=\varphi_0 $, then by \eqref{bn},
\begin{align*}
  X^\vee|_p\wedge (e^{14567}+e^{12367}-e^{12345})=0.
\end{align*}
It follows  that  $X^\vee|_p=ce^1$ for some constant $c$, however by \eqref{bnn},  $c$ must be zero, hence  $C_1=0$. Then  Corollary \ref{bnn} implies that  $d_MB=0$.
\end{proof}

\begin{example}
Treat  $\mathbb{R}^7$ as a $G_2$-manifold with the standard fundamental 3-form $\varphi_0$, and consider a 1-form
\begin{align*}
  B=x^2dx^3+ax^4dx^5+bx^6dx^7
\end{align*}
with constant real numbers $a,b$. Then identifying $e^i\simeq dx^i, i=1,\cdots,7$, one easily checks
\begin{align*}
B\wedge dB&=a(x^2e^{345}+x^4e^{235})+b(x^6e^{237}+x^2e^{367})+ab(x^4e^{567}+x^6e^{457}),\\
dB\wedge dB&=2ae^{2345}+2be^{2367}+2abe^{4567},\\
  dB\wedge \varphi_0&=(1+a)e^{12345}+(b-1)e^{12367}+(b-a)e^{14567}\neq0,\\
  dB\wedge *_{\varphi_0}\varphi_0&=(1+a-b)e^{234567},\\
B\wedge dB\wedge\varphi_0&=-(b-a)x^2e^{134567}-a(b-1)x^4e^{123567}-b(1+a)x^6e^{123457},\\
   dB\wedge dB\wedge\varphi_0&=(-a+b+ab)e^{1234567},\\
B\wedge dB\wedge *_{\varphi_0}\varphi_0&=dB\wedge dB\wedge *_{\varphi_0}\varphi_0=0.
\end{align*}
Therefore, we find that
\begin{itemize}
  \item $B$ is an anti-self-dual $U(1)$-instanton iff $b-a=1$,
  \item $B$ is an anti-self-dual  higher-order $U(1)$-instanton iff $-a+b+ab=0$,
  \item $B$ cannot be an anti-self-dual instanton and an anti-self-dual $U(1)$-instanton simultaneously, which can be seen from $dB\neq0$,
\item $B$ is  a special anti-self-dual higher-order $U(1)$-instanton iff $a=b=0$,  equivalently, $B$ is a higher-order flat $U(1)$-instanton or $B$ is  a trivial special anti-self-dual higher-order $U(1)$-instanton.
\end{itemize}

\end{example}

\begin{remark}
Finding (special) anti-self-dual  higher-order $U(1)$-instantons on compact $G_2$-manifolds is an interesting and hard problem.
\end{remark}

\section{Partition function for higher-order $U(1)$-Chern-Simons action}
In this section, we only consider the case of  $\mathcal{P}$ being a trivial $U(1)$-principal bundle over  $\mathcal{M}$, which produces a trivial $U(1)$-principal bundle $P$ over  $M$.
We introduce the ghost fields $\mathfrak{c}, \bar{\mathfrak{c}}$ which lie in $\Lambda^0(M)$ but are fermionic, and $\phi\in \Lambda^0(M)$ which is a Lagrangian multiplier corresponding to  $A$, then under the following transformations
\begin{align*}
  \delta \mathcal{A}_0=-\dot{\mathfrak{c}}:=-\frac{\partial \mathfrak{c}}{\partial t},&\ \ \delta B=-d_M\mathfrak{c},\\
\ \delta\bar{\mathfrak{c}}=\sqrt{-1}\phi,&\ \ \delta\phi=\delta\mathfrak{c}=0,
\end{align*}
the action
\begin{align}
 S_1=S_\textrm{HCS}+\int_L dt\int_Md\mathrm{Vol}_{g_\varphi}(\mathcal{A}_0\phi-\sqrt{-1}\bar{\mathfrak{c}}\dot{\mathfrak{c}}),
\end{align}
where $d\mathrm{Vol}_{g_\varphi}=\sqrt{\det g_\varphi}dX^1\wedge\cdots\wedge dX^7$ is the volume form expressed in terms of local coordinate $\{X^i,i=1,\cdots,7\}$ compatible with the orientation of $M$, is invariant up to a total derivative term
\begin{align*}
  -\int_L dt\int_{M}\frac{\partial}{\partial t}(\mathfrak{c}d_MB\wedge d_MB)\wedge\varphi.
\end{align*}
If one picks $L=\mathbb{R}$, the gauge fixing condition $\mathcal{A}_0=0$ can be imposed. If $L=S^1$, although the above  gauge fixing cannot be reached, one can require \cite{bt}
\begin{align*}
\left\{
  \begin{array}{ll}
\textrm{torus gauge fixing for }  \mathcal{A}_0: \frac{\partial \mathcal{A}_0}{\partial t}=0,  \\
\textrm{constraint on } \phi:  \int_{S^1}dt \phi=0.
  \end{array}
\right.
\end{align*}

In the following, we always assume $M$ is closed and simply-connected,  and $L=S^1$,  then after gauge fixing we consider the following action
\begin{align}
 \mathbf{S}=&-2\int_{S^1} dt\int_M B\wedge d_MB\wedge\dot{B}\wedge \varphi + 3\int_{S^1} dt\int_M \mathcal{A}_0 d_M B\wedge d_MB\wedge \varphi\nonumber\\
&-\sqrt{-1}\int_{S^1} dt\int_Md\mathrm{Vol}_{g_\varphi}\bar{\mathfrak{c}}\dot{\mathfrak{c}}
\end{align}
with the  residual gauge symmetries     given by
\begin{align*}
\delta B =d_Mf,\ \ \delta\mathcal{A}_0=c,\ \ \delta\mathfrak{c}=\delta\mathfrak{c}=0
\end{align*}
for $f\in \Lambda^0(M)$ independent of $t$, $c$ being a constant.
 The partition function is given by path integral as follows
\begin{align}\label{dx}
  Z_\lambda=\frac{1}{\mathrm{Vol}(\mathbf{G})}\int D\mathcal{A}_0 \prod_{n=-\infty}^\infty  DB_n\prod_{n=-\infty}^\infty D\mathfrak{c}_n\prod_{n=-\infty}^\infty  D\bar{\mathfrak{c}}_ne^{\frac{\sqrt{-1}}{\lambda^2}\mathbf{S}},
\end{align}
where $\lambda\in \mathbb{R}$ is the coupling constant, $\mathrm{Vol}(\mathbf{G})$ denotes the formal volume of the group $\mathbb{G}$ consisting of the gauge transformation of the action $\mathbf{S}$.

 By Fourier expansions
\begin{align}
  B&=\sum_{n=-\infty}^{\infty}B_ne^{2\sqrt{-1}\pi nt},\\
  \mathfrak{c}&=\sum_{n=-\infty}^{\infty}\mathfrak{c}_ne^{2\sqrt{-1}\pi nt}
\end{align}
with
 $\bar B_n=B_{-n}$
due to reality of $B$,
we have
\begin{align}
 \mathbf{S}=& \ 4\pi\sqrt{-1}\sum_{n=-\infty}^{\infty}\sum_{m=-\infty}^{\infty}(n+m)\int_MB_n\wedge d_MB_m\wedge B_{-n-m}\wedge \varphi\nonumber\\
& +3\sum_{n=-\infty}^{\infty}\int_M \mathcal{A}_0 d_M B_n\wedge d_MB_{-n}\wedge \varphi+2\pi\sum_{n=-\infty}^{\infty}n\int_Md\mathrm{Vol}_{g_\varphi}\bar{\mathfrak{c}}_{n}\mathfrak{c}_n
\end{align}
Choose a background
\begin{align}
  B^{\textrm{b}}=\sum_{n=-\infty}^\infty B^{\textrm{b}}_ne^{2\sqrt{-1}\pi nt}
\end{align}
  as the critical point of the action $\mathbf{S}$, namely we have
  \begin{align}
    d_MB^{\textrm{b}}\wedge  d_MB^{\textrm{b}}\wedge \varphi&=0,\\
    d_MB^{\textrm{b}}\wedge \dot B^{\textrm{b}}\wedge \varphi&=0,\\
 d_M\mathcal{A}_0\wedge d_MB^{\textrm{b}}\wedge \varphi&=0,
  \end{align}
  or equivalently in terms of Fourier modes
  \begin{align}
   \sum_{m+n=k} d_MB^{\textrm{b}}_m\wedge  d_MB^{\textrm{b}}_n\wedge \varphi&=0,\label{bg}\\
   \sum_{m+n=k}n d_MB_m^{\textrm{b}}\wedge  B_n^{\textrm{b}}\wedge \varphi&=0,\\
 d_M\mathcal{A}_0\wedge d_MB_m^{\textrm{b}}\wedge \varphi&=0,\label{bgg}
  \end{align}
  and express $
   B= B^{\textrm{b}}+\lambda\mathbb{B}
 $
with
 \begin{align}
  \mathbb{B}=\sum_{n=-\infty}^{\infty}\mathbb{B}_ne^{2\sqrt{-1}\pi nt}
\end{align}
  then the  action $\mathbf{S}$ is rewritten as
\begin{align}\label{zss}
  \mathbf{S}=&\ 3\lambda^2[4\pi\sqrt{-1}\sum_{n=-\infty}^{\infty}\sum_{m=-\infty}^{\infty}(n+m)\int_MB_m^{\textrm{b}}\wedge d_M\mathbb{B}_n\wedge \mathbb{B}_{-m-n}\wedge \varphi\nonumber\\
  & +\sum_{n=-\infty}^{\infty}\int_M \mathcal{A}_0 d_M \mathbb{B}_n\wedge d_M\mathbb{B}_{-n}\wedge \varphi] \nonumber \\
&+\lambda^3[4\pi\sqrt{-1}\sum_{n=-\infty}^{\infty}\sum_{m=-\infty}^{\infty}(n+m)\int_M\mathbb{B}_n\wedge d_M\mathbb{B}_m\wedge \mathbb{B}_{-n-m}\wedge \varphi]\nonumber\\
&+2\pi\sum_{n=-\infty}^{\infty}n\int_Md\mathrm{Vol}_{g_\varphi}\bar{\mathfrak{c}}_{n}\mathfrak{c}_n\nonumber\\
=:&\ \lambda^2\mathbf{S}_\textrm{q}+\lambda^3\mathbf{S}_\textrm{int}+\mathbf{S}_\textrm{gh}.
\end{align}
When $\lambda\rightarrow 0$, $Z_\lambda$ can be calculated by semiclassical approximation
\begin{align}\label{mbv}
  Z_{\textrm{sc}}=\frac{1}{\mathrm{Vol}(\mathbf{G})}\int D\mathcal{A}_0 \prod_{n=-\infty}^\infty DB^\textrm{b}_n \prod_{n=-\infty}^\infty D\mathbb{B}_n\prod_{n=-\infty}^\infty D\mathfrak{c}_n\prod_{n=-\infty}^\infty D\bar{\mathfrak{c}}_ne^{\sqrt{-1}(\mathbf{S}_\textrm{q}+\mathbf{S}_\textrm{gh})}.
\end{align}

Firstly, it is clear that
\begin{align*}
\int \prod_{n=-\infty}^\infty D\mathfrak{c}_n\prod_{n=-\infty}^\infty D\bar{\mathfrak{c}}_ne^{\sqrt{-1}\mathbf{S}_\textrm{gh}}={\det}^\prime(\frac{\partial}{\partial t}|_{\Lambda^0(M)\otimes\Lambda^0(S^1)}),
\end{align*}
where the prime above $\det$ means excluding zero mode. By  heat kernel  regularization \cite{bt,li}, we have
\begin{align*}
 & \log{\det}^\prime(\frac{\partial}{\partial t}|_{\Lambda^0(M)\otimes\Lambda^0(S^1)})\nonumber\\
 =&\ {\Tr}^\prime ( e^{-\varepsilon\Delta^{(0)}_M}\log\frac{\partial}{\partial t}|_{\Lambda^0(M)\otimes\Lambda^0(S^1)}) \\ =&\ \log{\det}^\prime(\frac{\partial}{\partial t}|_{\Lambda^0(S^1)}),
\end{align*}
where $\Delta_M^{(0)}$ is the Laplacian on $\Lambda^0(M)$, therefore we get
\begin{align}
&\int \prod_{n=-\infty}^\infty D\mathfrak{c}_n\prod_{n=-\infty}^\infty D\bar{\mathfrak{c}}_ne^{\sqrt{-1}\mathbf{S}_\textrm{gh}}\nonumber\\
=&\ {\det}^\prime(\frac{\partial}{\partial t}|_{\Lambda^0(S^1)})
 =\prod_{n>0}(2\pi n)^2=(\frac{\sqrt{2\pi}}{\sqrt{2\pi}})^2=1.
\end{align}

Next we deal with the path integral
\begin{align}\label{mkc}
&\frac{1}{\mathrm{Vol}(\mathbf{G})}\int D\mathcal{A}_0D B^bD\mathbb{B}e^{\sqrt{-1}\mathbf{S}_\textrm{q}}\nonumber\\
=&\ \frac{1}{\mathrm{Vol}(\mathbf{G})}\int D\mathcal{A}_0 D B^b D\mathbb{B}e^{3\sqrt{-1}\int_{S^1}dt\int_M\mathbb{B}\wedge d_M\circ(2B^b\wedge \varphi\wedge \frac{\partial}{\partial t}+d_M\mathcal{A}_0\wedge\varphi\wedge)\mathbb{B}}\nonumber\\
=&\ \frac{1}{\mathrm{Vol}(\mathbf{G})}\int D\mathcal{A}_0 D \prod_{n=-\infty}^\infty B_n^b\prod_{n=-\infty}^\infty D\mathbb{B}_n \nonumber\\ &\ \ \ \ \ \ \ \ \ \ \ \ \ \ \ \ \ \ \ \{e^{3\sqrt{-1}\sum_{n=-\infty}^{\infty}[-4\pi\sqrt{-1}\sum_{m=-\infty}^{\infty}(n+m)\int_M\mathbb{B}_n \wedge d_M\circ( B_m^b\wedge \varphi\wedge) \mathbb{B}_{-m-n}+ \mathbb{B}_n \wedge(d_M\mathcal{A}_0 \wedge \varphi\wedge d_M)\mathbb{B}_{-n}]}\}.
  \end{align}
One again  applies  heat kernel regularization, namely inserts the fact  $e^{\varepsilon\frac{\partial^2}{\partial t^2}}$ in  the operator
\begin{align*}
\mathcal{D}_{\mathcal{A}_0,B^b,\varphi}= d_M\circ(2B^b\wedge \varphi\wedge \frac{\partial}{\partial t}+d_M\mathcal{A}_0\wedge\varphi\wedge)
\end{align*}
as
\begin{align}
  \mathcal{D}_{\mathcal{A}_0,B^b,\varphi}\rightarrow d_M\circ(2e^{\varepsilon\frac{\partial^2}{\partial t^2}}B^b\wedge \varphi\wedge e^{\varepsilon\frac{\partial^2}{\partial t^2}}\frac{\partial}{\partial t}+d_M\mathcal{A}_0\wedge\varphi\wedge e^{\varepsilon\frac{\partial^2}{\partial t^2}})
\end{align}
then only  zero modes of $ B^b$ and $\mathbb{B}$ survive in \eqref{mkc}.
Therefore, we only need to consider
\begin{align}
\frac{1}{\mathrm{Vol}(\mathbf{G})}\int D B^b  D\mathbb{B}e^{\sqrt{-1}\mathbf{S}_\textrm{q}}&=\frac{1}{\mathrm{Vol}(\mathbf{G})}\int DB_0^b D\mathbb{B}_0 e^{3\sqrt{-1}\int_M  (\mathbb{B}_0\wedge d_M\mathcal{A}_0)d_M(\mathbb{B}_0\wedge\varphi)}.
\end{align}
We will use Schwarz-Pestun-Witten's method \cite{sc,pw} to formally calculate it.

Introduce the following notations.
\begin{itemize}
  \item $\Delta_\varphi=d^{\dagger_\varphi}_Md_M+d_M d^{\dagger_\varphi}_M$, $\Delta'_\varphi=d^{\dagger_\varphi}_Md_M$, $\Delta''_\varphi=d_M d^{\dagger_\varphi}_M$,
\item $\Lambda_{\textrm{closed}}^\bullet(M)$: the space of closed $\bullet$-forms,
  \item $\Lambda^2_{\mathcal{A}_0}(M)=\{\beta\in\Lambda^2(M):\beta=\alpha\wedge d_M\mathcal{A}_0\textrm{ for some } \alpha\in\Lambda^1(M)\}\simeq\Lambda^1(M)$,
  \item $\widetilde{\Lambda}^2_{\mathcal{A}_0}(M)=\{\beta\in \Lambda^2_{\mathcal{A}_0}(M): \Delta'_\varphi \beta\in \Lambda^2_{\mathcal{A}_0}(M)\}$,
\item $E\Lambda^2_{\mathcal{A}_0}(M)$: the space of exact 2-forms in $\Lambda^2_{\mathcal{A}_0}(M)$,
\item  $\Lambda^1_{\mathcal{A}_0}(M)=\{\alpha\in\Lambda^1(M):d_M\alpha\in \Lambda^2_{\mathcal{A}_0}(M)\}$,
\item $\widetilde{\Lambda}^1_{\mathcal{A}_0}(M)=\{\alpha\in \Lambda^1_{\mathcal{A}_0}(M): \Delta'_\varphi \alpha\in \Lambda^1_{\mathcal{A}_0}(M)\}$,
\item $\Lambda^2_\varphi(M)=\{\beta\in \Lambda^2(M): \beta\wedge \varphi=0\}$,
\item $E\Lambda^2_\varphi(M)$: the space of exact 2-forms in $\Lambda^2_{\varphi}(M)$,
\item  $\Lambda^1_{\varphi}(M)=\{\alpha\in\Lambda^1(M):d_M\alpha\in \Lambda^2_{\varphi}(M)\}$,
\end{itemize}
Then  formally, we have
\begin{align}
  &\frac{1}{\mathrm{Vol}(\mathbf{G})}\int  D\mathbb{B}_0 e^{3\sqrt{-1}\int_M  (\mathbb{B}_0\wedge d_M\mathcal{A}_0)d_M(\mathbb{B}_0\wedge\varphi)}\nonumber\\
=&\ \frac{1}{\mathrm{Vol}(\mathbf{G})}\{[{\det}'(9\Delta'_\varphi|_{\widetilde{\Lambda}^2_{\mathcal{A}_0}(M)})]^{-\frac{1}{4}}
[{\det}'(9\Delta'_\varphi|_{\Lambda^4_{(2)}(M)})]^{-\frac{1}{4}}\Vol(E\Lambda^2_{\mathcal{A}_0}(M))
\Vol(E\Lambda^2_{\varphi}(M))\}^{\frac{1}{2}}.
\end{align}
Moreover,  we have
\begin{align}
 &\Vol(E\Lambda^2_{\mathcal{A}_0}(M))\Vol(E\Lambda^2_{\varphi}(M))\nonumber\\
=&\ \frac{ \Vol(\Lambda^1_{\mathcal{A}_0}(M))}{\Vol(\Lambda^1_{\textrm{closed}}(M))}\frac{ \Vol(\Lambda^1_{\varphi}(M))}{\Vol(\Lambda^1_{\textrm{closed}}(M))}[{\det}'(\Delta'_\varphi|_{\widetilde{\Lambda}^1_{\mathcal{A}_0}(M)})]^{\frac{1}{2}}[{\det}'(\Delta'_\varphi|_{{\Lambda}^1_{\varphi}(M)})]^{\frac{1}{2}}\nonumber\\
=&\ \frac{ \Vol(\Lambda^1_{\mathcal{A}_0}(M))\Vol(\Lambda^1_{\varphi}(M))}{[\Vol(\Lambda^0(M))]^2}[\frac{\Vol(H^0(M,\mathbb{R}))}{\Vol(H^1(M,\mathbb{R}))}]^2\frac{[{\det}'(\Delta'_\varphi|_{\widetilde{\Lambda}^1_{\mathcal{A}_0}(M)})]^{\frac{1}{2}}[{\det}'(\Delta'_\varphi|_{{\Lambda}^1_{\varphi}(M)})]^{\frac{1}{2}}}
{{\det}'(\Delta_\varphi|_{{\Lambda}^0(M)})}.
\end{align}
Let $\mathbf{G}'\subset \mathbf{G}$ be the subgroup of $\mathbf{G}$ consisting of gauge transformations on $\mathbb{B}_0$ (preserving $\mathcal{A}_0,B_0^b$), then
 $\mathrm{Vol}(\mathbf{G}')$ can be renormalized to be $\frac{ [\Vol(\Lambda^1_{\mathcal{A}_0}(M))\Vol(\Lambda^1_{\varphi}(M))]^{\frac{1}{2}}}{\Vol(\Lambda^0(M))}$. Also, $\frac{\Vol(H^0(M,\mathbb{R}))}{\Vol(H^1(M,\mathbb{R}))}$ can be renormalized to be 1.
Consequently, we arrive at
\begin{align}
 &\frac{1}{\mathrm{Vol}(\mathbf{G})}\int D\mathcal{A}_0D B_0^bD\mathbb{B}e^{\sqrt{-1}\mathbf{S}_\textrm{q}}\nonumber\\
 =& \int_{ \mathbb{M}^b} D\mathcal{A}_0D B_0^b\frac{[{\det}'(\Delta'_\varphi|_{\widetilde{\Lambda}^1_{\mathcal{A}_0}(M)})]^{\frac{1}{4}}[{\det}'(\Delta'_\varphi|_{{\Lambda}^1_{\varphi}(M)})]^{\frac{1}{4}}}
{[{\det}'(\Delta_\varphi|_{{\Lambda}^0(M)})]^{\frac{1}{2}}[{\det}'(9\Delta'_\varphi|_{\widetilde{\Lambda}^2_{\mathcal{A}_0}(M)})]^{\frac{1}{8}}
[{\det}'(9\Delta'_\varphi|_{\Lambda^4_{(2)}(M)})]^{\frac{1}{8}}},
\end{align}
where $\mathbb{M}^b$
 is the formal  the moduli space  defined as
\begin{align*}
  \mathbb{M}^b=\frac{\{(B^b_0,\mathcal{A}_0)\in\Lambda^1(M)\oplus\Lambda^0(M):d_MB^b_0\wedge d_MB^b_0\wedge \varphi=d_M\mathcal{A}_0\wedge d_MB^b_0\wedge \varphi=0\}}{\textrm{ gauge transformations }}.
\end{align*}
In addition, we have
\begin{align*}
{\det}'(9\Delta'_\varphi|_{\Lambda^4_{(2)}(M)})&={\det}'(9\Delta'_\varphi|_{\Lambda^1(M)})\\
&=\frac{{\det}'(9\Delta_\varphi|_{\Lambda^1(M)})}{{\det}'(9\Delta''_\varphi|_{\Lambda^1(M)})}=\frac{{\det}'(9\Delta_\varphi|_{\Lambda^1(M)})}{{\det}'(9\Delta_\varphi|_{\Lambda^0(M)})}\\
&=9^{\dim H^0(M,\mathbb{R})-\dim H^1(M,\mathbb{R})}\frac{{\det}'(\Delta_\varphi|_{\Lambda^1(M)})}{{\det}'(\Delta_\varphi|_{\Lambda^0(M)})},
\end{align*}
where  the third equality is due to  the formula \cite{je,mc}
\begin{align*}
  {\det}'(c\Delta_\varphi|_{\Lambda^p(M)})=c^{-\dim H^p(M,\mathbb{R})}{\det}'(\Delta_\varphi|_{\Lambda^p(M)})
\end{align*}
for some constant $c$.
 Finally, we obtain
\begin{align}\
 Z_{\textrm{sc}}=\frac{1}{9}\frac{[{\det}'(\Delta'_\varphi|_{{\Lambda}^1_{\varphi}(M)})]^{\frac{1}{4}}}{[{\det}'(\Delta_\varphi|_{\Lambda^1(M)})]^{\frac{1}{8}}[{\det}'(\Delta_\varphi|_{\Lambda^0(M)})]^{\frac{3}{8}}}\int_{ \mathbb{M}^b} D\mathcal{A}_0D B_0^b\frac{[{\det}'(\Delta'_\varphi|_{\widetilde{\Lambda}^1_{\mathcal{A}_0}(M)})]^{\frac{1}{4}}}
{[{\det}'(9\Delta'_\varphi|_{\widetilde{\Lambda}^2_{\mathcal{A}_0}(M)})]^{\frac{1}{8}}
}.
\end{align}

\section{Higher-order $U(1)$-Chern-Simons actions for nontrivial $U(1)$-principal bundles}
In the final section, we discuss the higher-order $U(1)$-Chern-Simons actions for nontrivial $U(1)$-principal bundles. As mentioned in Introduction, we need the
Deligne-Beilinson cohomology theory. Hence, we first recall some basis materials in this theory. The original Deligne-Beilinson cohomology is defined for algebraic varieties, and the smooth analogy of this theory is also called Cheeger-Simons cohomology \cite{dbb}.  For a compact manifold $X$, the Deligne complex of sheaves is given by
\begin{align*}
\mathbf{DC}_{R(\ell)}:  0\rightarrow R(\ell)\rightarrow\Lambda^0(X,\ell)\xrightarrow{d}\Lambda^1(X,\ell)\xrightarrow{d}\cdots\xrightarrow{d}\Lambda^{\ell}(X,\ell),
\end{align*}
where $\Lambda^\bullet(X,\ell)=(2\pi\sqrt{-1})^{\ell}\Lambda^\bullet(X)$,  $R(\ell)=(2\pi\sqrt{-1})^{\ell}R$ for a subring $R$ of $\mathbb{R}$ and some integer $\ell\geq0$,
then the $q$-order Deligne-Beilinson cohomology group $H^q_{\mathrm{DB}}(X,R(\ell))$ is defined as the $q$-th  hypercohomology of the Deligne complex $\mathbf{DC}_{R(\ell)}$, i.e. $H^q_{\mathrm{DB}}(X,R(\ell))=\mathbb{H}^q(\mathbf{DC}_{R(\ell)})$.
Taking  an open cover $\mathcal{U}=\{\mathcal{U}_\alpha\}_{\alpha\in I}$ of $X$, i.e. $X=\bigcup_{\alpha\in I} \mathcal{U}_\alpha$, we consider the \v{C}ech  resolution of  $\mathbf{DC}_{R(\ell)}$:
\begin{equation*}
  \CD
  R(\ell) @>\iota>> \Lambda^0(X,\ell)@>d>> \Lambda^1(X,\ell)@>d>>\cdots @>d>>\Lambda^{\ell}(X,\ell) \\
  @V\iota VV @V \iota VV @V\iota VV@ V\iota VV @V\iota VV \\
 \mathcal{C}^0(\mathcal{U}, R(\ell)) @>\iota>>  \mathcal{C}^0(\mathcal{U}, \Lambda^0(X,\ell))@>d_M>>  \mathcal{C}^0(\mathcal{U},\Lambda^1(X,\ell))@>d>>\cdots @>d>> \mathcal{C}^0(\mathcal{U},\Lambda^{\ell}(X,\ell))\\
@V\delta VV @V \delta VV @V\delta VV@ V\delta VV @V\delta VV \\
\mathcal{C}^1(\mathcal{U}, R(\ell)) @>\iota>>  \mathcal{C}^1(\mathcal{U}, \Lambda^0(X,\ell))@>d>>  \mathcal{C}^1(\mathcal{U},\Lambda^1(X,\ell))@>d>>\cdots @>d>> \mathcal{C}^1(\mathcal{U},\Lambda^{\ell}(X,\ell))\\
@V\delta VV @V \delta VV @V\delta VV@ V\delta VV @V\delta VV \\
\vdots @>\iota>>  \vdots @>d>>  \vdots @>d>>\vdots @>d>> \vdots
\endCD,
\end{equation*}
where $\iota$ denotes the natural embedding, $\delta$ denotes the \v{C}ech operator, and $\mathcal{C}^q(\mathcal{U},\mathcal{F})$ denotes
the space of  $q$-dimensional \v{C}ech cochains  for a sheaf $\mathcal{F}$ over $M$. Therefore
\begin{align*}
 H^q_{\textrm{DB}}(X,R(\ell))=\lim\limits_{\substack{\longrightarrow\\\mathcal{U}}}H^q(\mathrm{Tot}_\mathcal{U}(\mathbf{DC}_{R(\ell)})),
\end{align*}
 where $\mathrm{Tot}_\mathcal{U}(\mathbf{DC}_{R(\ell)})$ is the total complex of the  \v{C}ech  resolution of  $\mathbf{DC}_{R(\ell)}$ associated to the cover $\mathcal{U}$. In particular, if $\mathcal{U}$ is a simple (good) cover \cite{gt}, we have $H^q_{\textrm{DB}}(M,R(\ell))=H^q(\mathrm{Tot}_\mathcal{U}(\mathbf{DC}_{R(\ell)}))$ \cite{bg}.

For our purpose, we take $R=\mathbb{Z}$. From the following commutative diagram
\begin{align*}
 \CD
  \mathbb{Z}(\ell) @>>> \Lambda^0(X,\ell) @>d>> \Lambda^1(X,\ell)@>d>>\cdots @>d>> \Lambda^{\ell}(X,\ell)\\
  @V  VV @V \exp_\ell VV @V\frac{1}{(2\pi\sqrt{-1})^{\ell-1}} VV @V\frac{1}{(2\pi\sqrt{-1})^{\ell-1}} VV@V\frac{1}{(2\pi\sqrt{-1})^{\ell-1}} VV\\
  0 @>>>  \underline{U(1)}_X@> d\log>> \Lambda^1(X,1)@>d>>\cdots @>d>> \Lambda^{\ell}(M,1)
\endCD,
\end{align*}
 where $\underline{U(1)}_X$ denotes sheaf of $U(1)$-valued functions over $X$, and $\exp_\ell(f)=e^{\frac{f}{(2\pi\sqrt{-1})^{\ell-1}}}$, we find that
 \begin{align*}
                                                                               H^q_{\mathrm{DB}}(X,\mathbb{Z}(\ell))\simeq \mathbb{H}^{q-1}(\mathbf{\underline{U(1)}}_X(\ell))
                                                                              \end{align*}
where $\mathbf{\underline{U(1)}}_X(\ell)$ denote the complex $\underline{U(1)}_X\xrightarrow{d\log}\Lambda^1(X,1)\xrightarrow{d}\cdots\xrightarrow{d}\Lambda^{\ell-1}(X,1)$. It immediately implies that Deligne-Beilinson cohomology  $H^2_{\mathrm{DB}}(X,\mathbb{Z}(1))$ parameterize the isomorphism classes of $U(1)$-principal bundles with connections over $M$ \cite{bg,gt}. More explicitly, choosing a simple cover $\mathcal{U}=\{\mathcal{U}_\alpha\}_{\alpha\in I}$ of $X$ with index set $I$,
an element in $H^2_{\mathrm{DB}}(X,\mathbb{Z}(1))$ is represented  by  a triple $(\{\mathbb{A}_\alpha\}_{\alpha\in I}, \{\Gamma_{\alpha\beta}\}_{\alpha,\beta\in I},\{\Upsilon_{\alpha\beta\gamma}\}_{\alpha,\beta,\gamma\in I})$ of \v{C}ech cochains
 satisfying
\begin{align}
  \mathbb{A}_\beta-\mathbb{A}_\alpha&=d\Gamma_{\alpha\beta}\label{41}\\
\Gamma_{\beta\gamma}-\Gamma_{\alpha\gamma}+\Gamma_{\alpha\beta}&=\Upsilon_{\alpha\beta\gamma}\label{42}\\
\Upsilon_{\beta\gamma\delta}-\Upsilon_{\alpha\gamma\delta}+\Upsilon_{\alpha\beta\delta}-\Upsilon_{\alpha\beta\gamma}&=0,
\end{align}
where $\{\mathbb{A}_\alpha\}\in \mathcal{C}^0(\mathcal{U},\Lambda^1(X,1)), \{\Gamma_{\alpha\beta}\}\in \mathcal{C}^1(\mathcal{U},\Lambda^0(X,1)), \{\Upsilon_{\alpha\beta\gamma}\}\in \mathcal{C}^0(\mathcal{U},\mathbb{Z}(1))$.
Writing  $g_{\alpha\beta}=e^{-\Gamma_{\alpha\beta}}, \Gamma_{\alpha\beta}=-2\pi\sqrt{-1}\Lambda_{\alpha\beta}$, then \eqref{42} indicates that  $\{g_{\alpha\beta}\}$ defines  transition functions on a $U(1)$-principal bundle, and \eqref{41} means that  $\{\mathbb{A}_\alpha\}$ defines a connection on such $U(1)$-principal bundle.

Now $X=\mathcal{M}=M\times S^1$, where $M$  is a closed $G_2$-manifold whose fundamental 3-form $\varphi$ is assumed to be an integral cohomology class.  One views $\varphi\in H^3(M,\mathbb{Z})$
as an element of Deligne-Beilinson cohomology $H^3_{\mathrm{DB}}(M,\mathbb{Z}(1))$, hence an element of  $H^3_{\mathrm{DB}}(\mathcal{M},\mathbb{Z}(1))$ by natural inclusion.  Let $\mathcal{A},\mathcal{B},\mathcal{C}\in H^2_{\mathrm{DB}}(\mathcal{M},\mathbb{Z}(1))$ describe isomorphism classes of $U(1)$-principal bundles with connections over $\mathcal{M}$.
Taking  a simple cover $\mathcal{U}=\{\mathcal{U}_\alpha\}_{\alpha\in I}$ of $\mathcal{M}$,   we represent $\mathcal{A},\mathcal{B},\mathcal{C}\in H^2_{\mathrm{DB}}(\mathcal{M},\mathbb{Z}(1))$ as
\begin{align*}
 \mathcal{A}&=(\{\mathbb{A}_\alpha\}, \{\Gamma_{\alpha\beta}\},\{\Upsilon_{\alpha\beta\gamma}\}),\\
\mathcal{B}&=(\{\mathbb{B}_\alpha\}, \{\Theta_{\alpha\beta}\},\{\Lambda_{\alpha\beta\gamma}\}),\\
\mathcal{C}&=(\{\mathbb{C}_\alpha\}, \{\Psi_{\alpha\beta}\},\{\Omega_{\alpha\beta\gamma}\}),\end{align*}
and also represent  $\varphi\in H^3_{\mathrm{DB}}(\mathcal{M},\mathbb{Z}(1))$  as
\begin{align*}
  \varphi&=(  \{\chi_{\alpha\beta}\},\{\tau_{\alpha\beta\gamma}\},  \{2\pi\sqrt{-1}\theta_{\alpha\beta\gamma\eta}\}),
\end{align*}
where $\{2\pi\sqrt{-1}\theta_{\alpha\beta\gamma\eta}\}\in C^3(\mathcal{U},\mathbb{Z}(1))$ are determined via the isomorphism
\begin{align*}
 H^3(\mathcal{M},\mathbb{Z})&\simeq H_{\textrm{\v{C}ech}}^3(\mathcal{M},\mathbb{Z}),\\
 \varphi&\mapsto \{\theta_{\alpha\beta\gamma\eta}\},
\end{align*}
 and $\{\chi_{\alpha\beta}\}\in C^1(\mathcal{U},\Lambda^1(\mathcal{M},1)), \{\tau_{\alpha\beta\gamma}\}\in C^2(\mathcal{U},\Lambda^0(\mathcal{M},1))$ are determined as follows
\begin{align*}
  \chi_{\alpha\beta}+\sum_{\gamma\in I}(d\tau_{\alpha\beta\gamma})\xi_\gamma&=0,\\
\tau_{\alpha\beta\gamma}-2\pi\sqrt{-1}\sum_{\eta\in I}\theta_{\alpha\beta\gamma\eta}\xi_\eta&=0
\end{align*}
for a partition$\{\xi_\alpha\}_{\alpha\in I}$ of unity  subordinate to the simple cover $\mathcal{U}$.

Consider  a polyhedral decomposition $(\{\mathfrak{P}^{(8)}_{\alpha_0}\}_{\alpha_0\in I},\cdots, \{\mathfrak{P}^{(0)}_{\alpha_0\cdots\alpha_8}\}_{\alpha_0,\cdots, \alpha_8\in I})$ of $M$ subordinate to $\mathcal{U}$, where $\mathfrak{P}^{(d)}_{\alpha_0\cdots\alpha_{8-d}}$ is a $d$-dimensional submanifold of $M$ lying in $\mathcal{U}_{\alpha_0}\bigcap\cdots \mathcal{U}_{\alpha_{8-d}}$, then we construct the gauge invariant $U(1)$-ABC action.
We begin with the following action
\begin{align*}
  I_0=\sum_{\alpha_0,\alpha_1,\alpha_2,\alpha_3\in I}\int_{\mathfrak{P}^{(5)}_{\alpha_0\alpha_1\alpha_2\alpha_3}}\theta_{\alpha_0\alpha_1\alpha_2\alpha_3}\mathbb{A}_{\alpha_0}\wedge d\mathbb{B}_{\alpha_0}\wedge d\mathbb{C}_{\alpha_0}.
\end{align*}
Under the lacal gauge transformation
\begin{align*}
  \mathbb{A}_\alpha\mapsto\mathbb{A}_\alpha+d\mathfrak{a}_\alpha, \   \ \mathbb{B}_\alpha\mapsto\mathbb{B}_\alpha+d\mathfrak{b}_\alpha,  \ \ \mathbb{C}_\alpha\mapsto\mathbb{C}_\alpha+d\mathfrak{c}_\alpha,
\end{align*}
for $\mathfrak{a}_\alpha,\mathfrak{b}_\alpha,\mathfrak{c}_\alpha\in \mathcal{C}^0(\mathcal{U},\Lambda^1(\mathcal{M},1)$, its variation is  given by
\begin{align*}
  \Delta_\textrm{lol} I_0&=\sum_{\alpha_0,\alpha_1,\alpha_2,\alpha_3,\alpha_4\in I}\int_{\mathfrak{P}^{(4)}_{\alpha_4\alpha_0\alpha_1\alpha_2\alpha_3}-\mathfrak{P}^{(4)}_{\alpha_0\alpha_4\alpha_1\alpha_2\alpha_3}+\mathfrak{P}^{(4)}_{\alpha_0\alpha_1\alpha_4\alpha_2\alpha_3}-
\mathfrak{P}^{(4)}_{\alpha_0\alpha_1\alpha_2\alpha_4\alpha_3}+\mathfrak{P}^{(4)}_{\alpha_0\alpha_1\alpha_2\alpha_3\alpha_4}}\\
&\ \ \ \ \ \ \ \  \ \ \ \ \ \  \ \ \ \ \ \  \ \ \ \ \ \  \ \ \ \ \ \  \ \  [\theta_{\alpha_0\alpha_1\alpha_2\alpha_3}\mathfrak{a}_{\alpha_0}\wedge d\mathbb{B}_{\alpha_0}\wedge d\mathbb{C}_{\alpha_0}]\\
&=\sum_{\alpha_0,\alpha_1,\alpha_2,\alpha_3,\alpha_4\in I}\int_{\mathfrak{P}^{(4)}_{\alpha_0\alpha_1\alpha_2\alpha_3\alpha_4}}[(\theta_{\alpha_0\alpha_1\alpha_2\alpha_3}-\theta_{\alpha_0\alpha_1\alpha_2\alpha_4}+\theta_{\alpha_0\alpha_1\alpha_3\alpha_4}-\theta_{\alpha_0\alpha_2\alpha_3\alpha_4})\\
&\ \ \ \ \ \ \ \  \ \ \ \ \ \  \ \ \ \ \ \  \ \ \ \ \ \  \ \ \ \ \ \  \ \cdot\mathfrak{a}_{\alpha_0}\wedge d\mathbb{B}_{\alpha_0}\wedge d\mathbb{C}_{\alpha_0}+\theta_{\alpha_1\alpha_2\alpha_3\alpha_4}\mathfrak{a}_{\alpha_1}d\mathbb{B}_{\alpha_1}\wedge d\mathbb{C}_{\alpha_1}]\\
&=\sum_{\alpha_0,\alpha_1,\alpha_2,\alpha_3,\alpha_4\in I}\int_{\mathfrak{P}^{(4)}_{\alpha_0\alpha_1\alpha_2\alpha_3\alpha_4}}\theta_{\alpha_1\alpha_2\alpha_3\alpha_4}(\mathfrak{a}_{\alpha_1}-\mathfrak{a}_{\alpha_0})d\mathbb{B}_{\alpha_0}\wedge d\mathbb{C}_{\alpha_0}],
\end{align*}
 which can be eliminated by the variation of the action
 \begin{align*}
   I_1=-\sum_{\alpha_0,\alpha_1,\alpha_2\alpha_3,\alpha_4\in I}\int_{\mathfrak{P}^{(4)}_{\alpha_0\alpha_1\alpha_2\alpha_3\alpha_4}}\theta_{\alpha_1\alpha_2\alpha_3\alpha_4}\Gamma_{\alpha_0\alpha_1}d\mathbb{B}_{\alpha_0}\wedge d\mathbb{C}_{\alpha_0}
 \end{align*}
under the transformation $\Gamma_{\alpha\beta}\mapsto\Gamma_{\alpha\beta}+\mathfrak{a}_\beta-\mathfrak{a}_{\alpha}$. However, the action $I_0+I_1$ is not invariant under the large gauge transformation
$\Gamma_{\alpha\beta}\mapsto\Gamma_{\alpha\beta}+z_{\alpha\beta}$ for $z_{\alpha\beta}\in\mathcal{C}^1(\mathcal{U},\mathbb{Z}(1))$. Indeed,  the variation of $I_1$ is given by \begin{align*}
                             \Delta_\textrm{lar}I_1= &\ \Delta^{1ar}_\textrm{l}I_1+ \Delta_\textrm{lar}^{2}I_1\\
=& -\sum_{\alpha_0,\alpha_1,\alpha_2,\alpha_3,\alpha_4,\alpha_5\in I}\int_{\mathfrak{P}^{(3)}_{\alpha_0\alpha_1\alpha_2\alpha_3\alpha_4\alpha_5}}\theta_{\alpha_2\alpha_3\alpha_4\alpha_5}(\delta z)_{\alpha_0\alpha_1\alpha_2}\mathbb{B}_0d\mathbb{C}_{\alpha_0} \\& - \sum_{\alpha_0,\alpha_1,\alpha_2,\alpha_3,\alpha_4,\alpha_5\in I} \int_{\mathfrak{P}^{(3)}_{\alpha_0\alpha_1\alpha_2\alpha_3\alpha_4\alpha_5}}\theta_{\alpha_2\alpha_3\alpha_4\alpha_5}z_{\alpha_1\alpha_2}(\mathbb{B}_{\alpha_1}d\mathbb{C}_{\alpha_1}-\mathbb{B}_{\alpha_0}d\mathbb{C}_{\alpha_0}).
                            \end{align*}
$\Delta^{1}_\textrm{lar}I_1$ can be eliminated by variation of the
 action\begin{align*}
         I_2=\sum_{\alpha_0,\alpha_1,\alpha_2,\alpha_3,\alpha_4,\alpha_5\in I}\int_{\mathfrak{P}^{(3)}_{\alpha_0\alpha_1\alpha_2\alpha_3\alpha_4\alpha_5}}\theta_{\alpha_2\alpha_3\alpha_4\alpha_5}\Upsilon_{\alpha_0\alpha_1\alpha_2}\mathbb{B}_{\alpha_0}\wedge d\mathbb{C}_{\alpha_0}
       \end{align*}
under the transformation $\Upsilon_{\alpha\beta\gamma}\mapsto \Upsilon_{\alpha\beta\gamma}+(\delta z)_{\alpha\beta\gamma}$.
As well as, $\Delta_\textrm{lar}^{2}I_1$ can be calculated as
\begin{align*}
 \Delta_\textrm{lar}^{2}I_1=&-\sum_{\alpha_0,\alpha_1,\alpha_2,\alpha_3,\alpha_4,\alpha_5\in I} \int_{\mathfrak{P}^{(3)}_{\alpha_0\alpha_1\alpha_2\alpha_3\alpha_4\alpha_5}}\theta_{\alpha_2\alpha_3\alpha_4\alpha_5}z_{\alpha_1\alpha_2}d(\Theta_{\alpha_0\alpha_1}d\mathbb{C}_{\alpha_0})\\
=& \sum_{\alpha_0,\alpha_1,\alpha_2,\alpha_3,\alpha_4,\alpha_5,\alpha_6\in I}\int_{\mathfrak{P}^{(2)}_{\alpha_0\alpha_1\alpha_2\alpha_3\alpha_4\alpha_5\alpha_6}}\theta_{\alpha_3\alpha_4\alpha_5\alpha_6}(\delta z)_{\alpha_1\alpha_2\alpha_3}\Theta_{\alpha_0\alpha_1}d\mathbb{C}_{\alpha_0}\\
&-\sum_{\alpha_0,\alpha_1,\alpha_2,\alpha_3,\alpha_4,\alpha_5,\alpha_6\in I}\int_{\mathfrak{P}^{(2)}_{\alpha_0\alpha_1\alpha_2\alpha_3\alpha_4\alpha_5\alpha_6}}\theta_{\alpha_3\alpha_4\alpha_5\alpha_6} z_{\alpha_2\alpha_3}\Lambda_{\alpha_0\alpha_1\alpha_2}d\mathbb{C}_{\alpha_0}\\
=&\  \sum_{\alpha_0,\alpha_1,\alpha_2,\alpha_3,\alpha_4,\alpha_5,\alpha_6\in I}\int_{\mathfrak{P}^{(2)}_{\alpha_0\alpha_1\alpha_2\alpha_3\alpha_4\alpha_5\alpha_6}}\theta_{\alpha_3\alpha_4\alpha_5\alpha_6}(\delta z)_{\alpha_1\alpha_2\alpha_3}\Theta_{\alpha_0\alpha_1}d\mathbb{C}_{\alpha_0}\\
&-\sum_{\alpha_0,\alpha_1,\alpha_2,\alpha_3,\alpha_4,\alpha_5,\alpha_6,\alpha_7\in I}\int_{\mathfrak{P}^{(1)}_{\alpha_0\alpha_1\alpha_2\alpha_3\alpha_4\alpha_5\alpha_6\alpha_7}}\theta_{\alpha_4\alpha_5\alpha_6\alpha_7} (\delta z)_{\alpha_2\alpha_3\alpha_4}\Lambda_{\alpha_0\alpha_1\alpha_2}\mathbb{C}_{\alpha_0}\\
&-\sum_{\alpha_0,\alpha_1,\alpha_2,\alpha_3,\alpha_4,\alpha_5,\alpha_6,\alpha_7\in I}\int_{\mathfrak{P}^{(1)}_{\alpha_0\alpha_1\alpha_2\alpha_3\alpha_4\alpha_5\alpha_6\alpha_7}}\theta_{\alpha_4\alpha_5\alpha_6\alpha_7}  z_{\alpha_3\alpha_4}\Lambda_{\alpha_1\alpha_2\alpha_3}d\Psi_{\alpha_0\alpha_1}\\
\approx&  \sum_{\alpha_0,\alpha_1,\alpha_2,\alpha_3,\alpha_4,\alpha_5,\alpha_6\in I}\int_{\mathfrak{P}^{(2)}_{\alpha_0\alpha_1\alpha_2\alpha_3\alpha_4\alpha_5\alpha_6}}\theta_{\alpha_3\alpha_4\alpha_5\alpha_6}(\delta z)_{\alpha_1\alpha_2\alpha_3}\Theta_{\alpha_0\alpha_1}d\mathbb{C}_{\alpha_0}\\
&-\sum_{\alpha_0,\alpha_1,\alpha_2,\alpha_3,\alpha_4,\alpha_5,\beta_6,\alpha_7\in I}\int_{\mathfrak{P}^{(1)}_{\alpha_0\alpha_1\alpha_2\alpha_3\alpha_4\alpha_5\alpha_6\alpha_7}}\theta_{\alpha_4\alpha_5\alpha_6\alpha_7} (\delta z)_{\alpha_2\alpha_3\alpha_4}\Lambda_{\alpha_0\alpha_1\alpha_2}\mathbb{C}_{\alpha_0}\\
&+\sum_{\alpha_0,\alpha_1,\alpha_2,\alpha_3,\alpha_4,\alpha_5,\alpha_6,\alpha_7,\alpha_8\in I}\int_{\mathfrak{P}^{(0)}_{\alpha_0\alpha_1\alpha_2\alpha_3\alpha_4\alpha_5\alpha_6\alpha_7\alpha_8}}\theta_{\alpha_5\alpha_6\alpha_7\alpha_8}  (\delta z)_{\alpha_3\alpha_4\alpha_5}\Lambda_{\alpha_1\alpha_2\alpha_3}\Psi_{\alpha_0\alpha_1},
\end{align*}
where the notation $\approx$ means that we have omitted the term as the form of $(2\pi\sqrt{-1})^3\mathbb{Z}$.
The three terms on the right hand side of $\approx$ can be eliminated by the variations of the actions
\begin{align*}
  I_3&=-\sum_{\alpha_0,\alpha_1,\alpha_2,\alpha_3,\alpha_4,\alpha_5,\alpha_6\in I} \int_{\mathfrak{P}^{(2)}_{\alpha_0\alpha_1\alpha_2\alpha_3\alpha_4\alpha_5\alpha_6}}\theta_{\alpha_3\alpha_4\alpha_5\alpha_6}\Upsilon_{\alpha_1\alpha_2\alpha_3}\Theta_{\alpha_0\alpha_1}d\mathbb{C}_{\alpha_0},\\
I_4&=\sum_{\alpha_0,\alpha_1,\alpha_2,\alpha_3,\alpha_4,\alpha_5,\alpha_6,\alpha_7\in I}\int_{\mathfrak{P}^{(1)}_{\alpha_0\alpha_1\alpha_2\alpha_3\alpha_4\alpha_5\alpha_6\alpha_7}}\theta_{\alpha_4\alpha_5\alpha_6\alpha_7}\Upsilon_{\alpha_2\alpha_3\alpha_4}\Lambda_{\alpha_0\alpha_1\alpha_2} \mathbb{C}_{\alpha_0}, \\
I_5&=-\sum_{\alpha_0,\alpha_1,\alpha_2,\alpha_3,\alpha_4,\alpha_5,\alpha_6,\alpha_7,\alpha_8\in I}\int_{\mathfrak{P}^{(0)}_{\alpha_0\alpha_1\alpha_2\alpha_3\alpha_4\alpha_5\alpha_6\alpha_7\alpha_8}}\theta_{\alpha_5\alpha_6\alpha_7\alpha_8}
\Upsilon_{\alpha_3\alpha_4\alpha_5}\Lambda_{\alpha_1\alpha_2\alpha_3} \Psi_{\alpha_0\alpha_1},
\end{align*}
respectively, under the transformation $\Upsilon_{\alpha\beta\gamma}\mapsto \Upsilon_{\alpha\beta\gamma}+(\delta z)_{\alpha\beta\gamma}$. Similar calculations exhibit $I_2+I_3$ and $I_4+I_5$ are invariant under the local gauge transformations.
Consequently, the gauge invariant $U(1)$-ABC action reads
\begin{align}\label{bjk}
 S_{\mathrm{ABC}}\approx&\ I_0+I_1+I_2+I_3+I_4+I_5\nonumber\\
 =& \ \sum_{\alpha_0,\alpha_1,\alpha_2,\alpha_3\in I}\int_{\mathfrak{P}^{(5)}_{\alpha_0\alpha_1\alpha_2\alpha_3}}\theta_{\alpha_0\alpha_1\alpha_2\alpha_3}\mathbb{A}_{\alpha_0}\wedge d\mathbb{B}_{\alpha_0}\wedge d\mathbb{C}_{\alpha_0}\nonumber\\
& -\sum_{\alpha_0,\alpha_1,\alpha_2\alpha_3,\alpha_4\in I}\int_{\mathfrak{P}^{(4)}_{\alpha_0\alpha_1\alpha_2\alpha_3\alpha_4}}\theta_{\alpha_1\alpha_2\alpha_3\alpha_4}\Gamma_{\alpha_0\alpha_1}d\mathbb{B}_{\alpha_0}\wedge d\mathbb{C}_{\alpha_0}\nonumber\\
& +\sum_{\alpha_0,\alpha_1,\alpha_2,\alpha_3,\alpha_4,\alpha_5\in I}\int_{\mathfrak{P}^{(3)}_{\alpha_0\alpha_1\alpha_2\alpha_3\alpha_4\alpha_5}}\theta_{\alpha_2\alpha_3\alpha_4\alpha_5}\Upsilon_{\alpha_0\alpha_1\alpha_2}\mathbb{B}_{\alpha_0}\wedge d\mathbb{C}_{\alpha_0}\nonumber\\
&-\sum_{\alpha_0,\alpha_1,\alpha_2,\alpha_3,\alpha_4,\alpha_5,\alpha_6\in I} \int_{\mathfrak{P}^{(2)}_{\alpha_0\alpha_1\alpha_2\alpha_3\alpha_4\alpha_5\alpha_6}}\theta_{\alpha_3\alpha_4\alpha_5\alpha_6}\Upsilon_{\alpha_1\alpha_2\alpha_3}\Theta_{\alpha_0\alpha_1}d\mathbb{C}_{\alpha_0}\nonumber\\
&+ \sum_{\alpha_0,\alpha_1,\alpha_2,\alpha_3,\alpha_4,\alpha_5,\alpha_6,\alpha_7\in I}\int_{\mathfrak{P}^{(1)}_{\alpha_0\alpha_1\alpha_2\alpha_3\alpha_4\alpha_5\alpha_6\alpha_7}}\theta_{\alpha_4\alpha_5\alpha_6\alpha_7}\Upsilon_{\alpha_2\alpha_3\alpha_4}\Lambda_{\alpha_0\alpha_1\alpha_2} \mathbb{C}_{\alpha_0}\nonumber\\
&-\sum_{\alpha_0,\alpha_1,\alpha_2,\alpha_3,\alpha_4,\alpha_5,\alpha_6,\alpha_7,\alpha_8\in I}\int_{\mathfrak{P}^{(0)}_{\alpha_0\alpha_1\alpha_2\alpha_3\alpha_4\alpha_5\alpha_6\alpha_7\alpha_8}}\theta_{\alpha_5\alpha_6\alpha_7\alpha_8}
\Upsilon_{\alpha_3\alpha_4\alpha_5}\Lambda_{\alpha_1\alpha_2\alpha_3} \Psi_{\alpha_0\alpha_1}.
\end{align}

The  Deligne-Beilinson cup product \cite{dbb}
\begin{align*}
 \bigcup:&\ H^q_{\textrm{DB}}(X,\mathbb{Z}(\ell))\times H^t_{\mathrm{DB}}(X,\mathbb{Z}(\jmath))\\
&\ \rightarrow \left\{
                                                      \begin{array}{ll}
                                                                                                            H^{q+t}_{\textrm{DB}}(X,\mathbb{Z}(\ell+\jmath+1))  , & \hbox{$q=\ell+1, t\leq \jmath+1$\textrm{ or } $t=\jmath+1, q\leq \ell+1$;} \\
                                                                                H^{q+t-1}_{\textrm{DB}}(X,\mathbb{Z}(\ell+\jmath+1))                             , & \hbox{$q\geq\ell+2, t\leq \jmath+1$\textrm{ or } $t\geq\jmath+2, q\leq \ell+1$;} \\
                                                                                                            H^{q+t}_{\textrm{DB}}(X,\mathbb{Z}(\ell+\jmath+1)) \simeq H^{q+t}(X,\mathbb{Z})                            , & \hbox{$q\geq\ell+2, t\geq \jmath+2$\textrm{ or } $t\geq\jmath+2, q\geq \ell+2$;} \\
0, & \hbox{\textrm{ other cases. }}
                                                                                                          \end{array}
                                                                                                        \right.
\end{align*}
defines
\begin{align*}
\mathcal{A}\bigcup \mathcal{B} \in  & \ H^{4}_{\textrm{DB}}(\mathcal{M},\mathbb{Z}(3)), \ \ \mathcal{A}\bigcup \mathcal{B}\bigcup\mathcal{C}\in H^{6}_{\textrm{DB}}(\mathcal{M},\mathbb{Z}(5)),\\
& \ \ \varphi\bigcup\mathcal{A} \bigcup \mathcal{B}\bigcup \mathcal{C}\in H^{8}_{\textrm{DB}}(\mathcal{M},\mathbb{Z}(7))\simeq \mathbb{R}/\mathbb{Z}.
\end{align*}
Then the action \eqref{bjk} can be briefly written as
\begin{align}\label{mnb}
 S_{\mathrm{ABC}}\approx&\ \frac{1}{2\pi\sqrt{-1}}\int_\mathcal{M}\varphi \bigcup\mathcal{A} \bigcup \mathcal{B}\bigcup \mathcal{C}\nonumber\\
& -\frac{1}{2\pi\sqrt{-1}}\sum_{\alpha_0,\alpha_1\in I}\int_{\mathfrak{P}^{(7)}_{\alpha_0\alpha_1}}\chi_{\alpha_0\alpha_1}d\mathbb{A}_{\alpha_0}\wedge d\mathbb{B}_{\alpha_0}\wedge d\mathbb{C}_{\alpha_0}\nonumber\\
&+\frac{1}{2\pi\sqrt{-1}}\sum_{\alpha_0,\alpha_1,\alpha_2\in I}\int_{\mathfrak{P}^{(6)}_{\alpha_0\alpha_1\alpha_2}}\tau_{\alpha_0\alpha_1\alpha_2}d\mathbb{A}_{\alpha_0}\wedge d\mathbb{B}_{\alpha_0}\wedge d\mathbb{C}_{\alpha_0},
\end{align}
where the extra two terms are obviously gauge invariant.

\end{document}